\definecolor{darkpastelgreen}{rgb}{0.01, 0.75, 0.24}
\definecolor{orange}{rgb}{1.0, 0.55, 0.0}
\theoremstyle{plain}
\newtheorem{theorem}{Theorem}
\newtheorem{lemma}{Lemma}
\theoremstyle{definition}
\theoremstyle{remark}
\newtheorem{remark}{Remark}
\begin{document}

\title{Sample Greedy Gossip for Distributed Network-Wide Average Computation}
\author{Hyo-Sang~Shin\textsuperscript{*}, Shaoming~He and Antonios~Tsourdos\thanks{
Hyo-Sang Shin, Shaoming He and Antonios Tsourdos are with the School of Aerospace, Transport and Manufacturing, Cranfield University, Cranfield MK43 0AL, UK
(email: $\left\{\text{h.shin,shaoming.he,a.tsourdos}\right\}$@cranfield.ac.uk)}
\thanks{\textsuperscript{*}Corresponding Author.}}
\maketitle

\begin{abstract}
This paper investigates the problem of distributed network-wide averaging and proposes a new  greedy gossip algorithm. Instead of finding the optimal path of each node in a greedy manner, the proposed approach utilises a suboptimal communication path by performing greedy selection among randomly selected active local nodes. Theoretical analysis on convergence speed is also performed to investigate the characteristics of the proposed algorithm. The main feature of the new algorithm is that it provides great flexibility and well balance between communication cost and convergence performance introduced by the stochastic sampling strategy. Extensive numerical simulations are performed to validate the analytic findings. 
\end{abstract}

\begin{IEEEkeywords}
Partially-connected network, Distributed averaging, Gossip process, Sample greedy
\end{IEEEkeywords}


\section{Introduction}
\label{sec:1}

Employment of multiple mobile robots to cooperatively perform complex tasks has become a viable option along with dramatic technical advancements in low-cost, lightweight and power efficient robots. Average computation or average agreement over a wireless agent network is recognised as a key enabler for the operation of such a multiple robot system \cite{talebi2016distributed,he2018multi,jia2016cooperative,peterson2015exploitation,rabbat2006decentralized,he2019distributeda,hlinka2013distributed,he2019distributed}.  To this end, this paper investigates the problem of network-wide averaging, especially based on the distributed architecture.

With the development of network theory, the control-theoretic consensus algorithm \cite{olfati2007consensus,carli2008distributed,ren2007information,ren2018distributed} has become a popular and powerful tool in network-wide average computation. At each iteration, each local agent broadcasts its information to locally-connected neighbours and leverages the received information for update. Although average consensus provides a general framework for network-wide computation, the information exchange typically requires each local agent to be accessible by all its local agents \cite{xiao2004fast,ustebay2010greedy}. In contrast, gossip message passing algorithms only utilise two local agents in information exchange at every iteration and thus shows practical attractiveness and robustness against topology variation \cite{fagnani2008randomized,sarwate2009impact,iutzeler2013analysis}. 

One pioneering work of gossip algorithms was the randomised gossip, proposed in \cite{boyd2006randomized}. At each communication round, one local node randomly wakes up and randomly picks one of its neighbouring node for information exchange. These two agents then update their information as the average of their prior information. Although the randomised gossip guarantees asymptotic convergence, the convergence speed is very slow due to the randomised nature. An improvement over randomised gossip was geographic gossip \cite{dimakis2008geographic}, which leverages the geographic routing to improve the convergence speed of randomised gossip. The basic idea behind geographic gossip is to utilise a multi-hop communication strategy with full knowledge of geographic property of the network. This communication structure, however, largely relies on the reliability of network communications over many hops and inevitably involves communication overhead due to geographic routing. Unlike geometric gossip, the broadcast gossip, proposed in\cite{aysal2009broadcast,aysal2008broadcast,fagnani2011broadcast,franceschelli2010distributed}, makes use of the broadcast nature of wireless sensor networks. In other words, a local node is activated uniformly and broadcasts its information to connected neighbours at each iteration. These neighbouring nodes then update their information by a weighted average of their own and the broadcasted information. Although the convergence of broadcast gossip is relatively fast, this algorithm cannot guarantee the convergence to the true average, i.e., it introduces bias. In a recent noteworthy contribution \cite{ustebay2010greedy}, the authors suggested a new variant of gossip process, termed as greedy gossip, for distributed averaging. Unlike randomised gossip, the greedy gossip finds the optimal communication path that provides the maximum information discrepancy for a random local agent. It has been theoretically proved that the greedy gossip can significantly improve the convergence speed at the price of higher communication burden, compared to randomised gossip.

The two types of gossip, randomised and greedy, algorithms exhibit complementary characteristics.  More specifically, the randomised gossip has lower computational burden at each gossip iteration, but its convergence rate is relatively slow due to the randomised nature. On the other hand, the greedy gossip approach enjoys faster convergence by leveraging a deterministic communication strategy, compared with the randomised gossip. However, it requires communications of each node with all its neighbouring nodes connected to find the optimal path. This consequently increases the communication burden, which might be against the main strength of the original randomised gossip, at each iteration. 

Based on these observations, this paper proposes a new variant of gossip process for distributed network-wide average computation. The algorithm developed is called sample greedy gossip (SGG) since it exploits stochastic sampling strategy. Unlike previous gossip algorithms, the proposed algorithm is a generalised version of randomised gossip and greedy gossip: it exhibits positive features of both randomised and greedy gossip. Different from randomised gossip, SGG takes the advantage of greedy gossip by applying the greedy node selection strategy to a randomly active node set. As the expected size of the active node set is smaller than the number of locally-connected neighbours, SGG can reduce the communication overload in the average sense. Therefore, the proposed SGG algorithm provides great flexibility and well balance between communication cost and convergence performance. Theoretical analysis reveals that the proposed SGG algorithm guarantees asymptotic convergence to the average state. The effect of the stochastic sampling strategy on the convergence rate of the proposed SGG algorithm is also theoretically analysed to demonstrate the tradeoff performance of SGG. To provide better insights into SGG, we derive the upper bound of its convergence rate, which relates the convergence performance of SGG to that of randomised gossip and greedy gossip. 

The rest of the paper is organised as follows. Sec. \ref{sec:2} introduces of the proposed SGG algorithm. Sec. \ref{sec:3} presents theoretical analysis of the SGG algorithm, followed by some numerical evaluations provided in Sec. \ref{sec:5}. Finally, some conclusions are offered. 


\section{Algorithm Development}
\label{sec:2}

This section will develop a new SGG algorithm, which is a generalised version of the gossip algorithm, by exploiting the benefits of both randomised gossip \cite{boyd2006randomized} and greedy gossip\cite{ustebay2010greedy}.  We consider a network of $N$ agents and model the network topology as an undirected graph ${\mathcal G} = \left({\mathcal V},{\mathcal E}\right)$, where ${\mathcal V}=\left\{1,2,\cdots,N\right\}$ represents the node set and ${\mathcal E} \subset  {\mathcal V} \times {\mathcal V}$ denotes the edge set. If agents $s$ and $t$ can directly communicate with each other, then $(s,t) \in \mathcal E$. We assume the network is strongly-connected\footnote{Note that this condition is required for guaranteeing the asymptotic convergence of gossip algorithms. However, this dose not mean that gossip algorithms are only applicable to strongly-connected networks.}, e.g., any two nodes can communicate with each other through a multi-hop path. Denote $a_s$ as the available information from the $s$th node and is initialised as $a_s(0)$. The objective of gossip algorithm is to enforce all local nodes to make an agreement on the initial average value $\sum\nolimits_{s = 1}^{N} {{a_s}\left( 0 \right)}$ using only local information available from the connected neighbours. 

At the $l$th round of the randomised gossip iterations, a node $s \in \mathcal V$ is randomly selected for communication. This can be accomplished by using an asynchronous time mode\footnote{Each node has a Poisson time clock with rate 1. At every gossip iteration, only one node is randomly picked up to communicate with one of its neighbours by clock ticking.} , as described in \cite{boyd2006randomized}. Then, node $s$ randomly selects a neighbour node $t \in {\cal N}_s$, where ${\cal N}_s$ represents the set of the nodes connected to the $s$th node (not including $s$), and performs information average as
\begin{equation}
a_s\left( l \right) = a_t\left( l \right) = \frac{a_s\left( l-1 \right) + a_t\left( l-1 \right)}{2}
\label{eq:3}
\end{equation}

For strongly-connected network, it was shown in \cite{boyd2006randomized} that the randomised gossip algorithm guarantees asymptotic convergence to the initial average value, that is $\mathop {\lim }\limits_{l \to \infty } {a_s\left( l \right)} = \frac{1}{N}\sum\nolimits_{s = 1}^{N} {{a_s}\left( 0 \right)}$. The main advantage of randomised gossip is that it has relatively low communication burden since each node only needs to communicate with one connected node during one iteration round. However, the randomised gossip algorithm might suffers from low convergence speed due to the randomised communication strategy. For this reason, a greedy gossip algorithm was proposed in \cite{ustebay2010greedy} to accelerate the convergence rate, i.e., reduce the number of required gossip iterations to reach an agreement among all nodes. The greedy gossip algorithm employs a deterministic procedure to select the communication nodes, i.e., active nodes. More specifically, at the $l$th round of the gossip iterations, the $s$th node selects a node $t^* \in {\cal N}_s$ that has the largest information discrepancy, that is
\begin{equation}
t^* =  \mathop {\max }\limits_{t \in {\mathcal N}_s} \left[a_{s}\left( l \right) - a_{t}\left( l \right)\right]^2
\label{eq:5}
\end{equation}

After finding $t^*$, the two agents perform information averaging using Eq. (\ref{eq:3}). Compared with the original randomised gossip process, the greedy gossip algorithm is proved to provide improved convergence speed. However, this strategy requires each node to communicate with all nodes connected to find the optimal communication path and thus might increase the communication overhead. 

Motivated by the complementary properties of these two different gossip algorithms, this paper develops a new variant of gossip algorithm, called SGG. The proposed gossip algorithm enjoys the advantages of both randomised gossip and greedy gossip: relatively faster convergence speed and lower communication burden. The fundamental idea of SGG is to apply the greedy node selection strategy to the set of randomly active nodes, ${\mathcal A}_s$, which is a subset of ${\mathcal N}_s$, i.e., ${\mathcal A}_s \subset {\mathcal N}_{s}$.  For notational convenience, we define ${\mathcal N}_s \buildrel \Delta \over = \left\{ Sn(1),Sn(2),\cdots,Sn\left(\left|{\mathcal N}_s\right|\right)  \right\}$ with $Sn(i) \in  \left\{1,2,\cdots,N \right\}$ for all $i \in \left\{ 1,2,\cdots,\left|{\mathcal N}_s\right| \right\}$, where $\left|{\mathcal N}_s\right|$ denotes the cardinality of set ${\mathcal N}_s$. The active node set ${\mathcal A}_s$ is generated by randomly selecting nodes from $\mathcal N_{s}$ for communication based on a stochastic uniform sampling procedure. More specifically, all nodes in ${\mathcal N}_s$ generate a sample, i.e., $q_i$ for all $i \in \left\{1, 2, \ldots, \left|{\mathcal N}_s\right|\right\}$, from uniform distribution ${\mathcal U}\left( {0,1} \right)$. Each node from $\mathcal N_{s}$ is then activated with probability $p \in [0,1]$: if $q_i \le p$, node $Sn(i)$ decides to actively communicate with node $s$. Once we have ${\mathcal A}_s$, the $s$th node selects a suboptimal node for information exchange as
\begin{equation}
t^* =  \mathop {\max }\limits_{t \in {\mathcal A}_s} \left[a_{s}\left( l \right) - a_{t}\left( l \right)\right]^2
\end{equation}
and leverages Eq. (\ref{eq:3}) for information update. If no node has been activated by the sampling strategy, i.e., ${\mathcal A}_s = \emptyset$, we perform randomised gossip for update. The pseudo code of the proposed SGG algorithm is summarised in Algorithm \ref{algo:1}.

\begin{remark}
It can be clearly noted that the proposed algorithm only utilises a random subset of ${\mathcal N}_s$ to perform greedy node selection. The communication burden, therefore, depends on $p$, which determines the size of ${\mathcal A}_s$. For example, if the node activation probability $p$ is chosen as $p=0.5$, the communication burden is half in average sense at every gossip iteration, compared with the greedy gossip. Therefore, the proposed SGG can be viewed as a generalised version of the gossip algorithm that can trade-off between randomised gossip and greedy gossip: the proposed SGG becomes greedy gossip with $p=1$ and reduces to randomised gossip with $p=0$.
\end{remark}

\begin{algorithm}
\caption{Sample greedy gossip}\label{algo:1}
\textbf{Input:} Initial local information $a_s(0)$, maximum iteration step $L$, node activation probability $p$\\
\textbf{Output:} Fused information $a_s(L)$
\begin{algorithmic}[1]
\State Randomly selects a node $s$ from the network
\For {$l = 1:L$}
	\State ${\mathcal A}_s = \emptyset $
	\Comment{Initialise the active node set $ {\mathcal A}_s$}
	\For {$i = 1: \left|{\mathcal N}_s\right|$}
		\State Node $Sn(i)$ generates a sample $q_i \sim {\mathcal U}\left( {0,1} \right)$
		\If {$q_i \le p$}
			\State Node $Sn(i)$ decides to actively communicate with node $s$
			\State ${\mathcal A}_s = {\mathcal A}_s \cup Sn(i)$
			\Comment{Update the active node set $ {\mathcal A}_s$}
                      \EndIf
	\EndFor
	\If {${\mathcal A}_s \ne \emptyset $}
           	\State $t^* =  \mathop {\max }\limits_{t\in {\mathcal A}_s} \left[a_{s}\left( l-1 \right) - a_{t}\left( l-1 \right)\right]^2$
		\Comment{Greedy node selection from ${\mathcal A}_s$}
           	\State $a_s\left( l \right) = a_{t^*}\left( l \right) = \frac{1}{2}\left[a_s\left( l -1\right) + a_{t^*}\left( l-1 \right)\right]$
	\Else
		\State Randomly selects a node $t$ from $ {\mathcal N}_s$
		\Comment{Randomised gossip}
		\State $a_s\left( l \right) = a_{t}\left( l \right) = \frac{1}{2}\left[a_s\left( l -1\right) + a_{t}\left( l-1 \right)\right]$
	\EndIf

\EndFor
\end{algorithmic}
\end{algorithm}

\section{Convergence Analysis}
\label{sec:3}

In this section, we provide detailed theoretical analysis of the proposed SGG algorithm, demonstrating its convergence performance and comparing its convergence rate with that of randomised gossip and greedy gossip. Note that the analysis carried out in this section is based on the asynchronous time mode. However, the extension to synchronous time model\footnote{Unlike asynchronous time model, all nodes in every gossip iteration communicate
simultaneously with one of their neighbour.} is straightforward and the qualitative/quantitative results are unaffected by the type of time model \cite{kempe2003gossip,karp2000randomized,boyd2006randomized}.

\subsection{Asymptotic Convergence Analysis}

This subsection analyses the asymptotic convergence of the proposed SGG algorithm. For notational convenience, we denote $a(l)=\left[a_1(l),a_2(l),\cdots,a_N(l)\right]^T$ and define $\overline{a}$ as a column vector with each element being $\frac{1}{N}\sum\nolimits_{l = 1}^N {{a_l}\left( 0 \right)} $. The main results are presented in the following theorem.

\begin{theorem}
The proposed SGG algorithm guarantees asymptotic convergence to the average state $\bar a$.
\end{theorem}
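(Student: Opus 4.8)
The plan is to exhibit a quadratic Lyapunov function that is monotonically non-increasing along every sample path and whose expectation contracts geometrically, which together force almost-sure convergence to $\bar a$. First I would record two invariances of the pairwise update in Eq. (\ref{eq:3}): it preserves the running sum, so that $\mathbf{1}^T a(l) = \mathbf{1}^T a(0)$ for all $l$, and it fixes $\bar a$, since averaging two equal components leaves them unchanged. Writing each iteration as $a(l) = W(l)\,a(l-1)$ with $W(l) = I - \tfrac{1}{2}(e_s-e_{t^*})(e_s-e_{t^*})^T$ a (state-dependent) symmetric doubly stochastic averaging matrix, these facts give $W(l)\bar a = \bar a$ and hence $a(l)-\bar a = W(l)\,[a(l-1)-\bar a]$.

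Next I would introduce $V(l) = \|a(l)-\bar a\|^2$ and derive the exact decrease identity. A direct expansion of the averaging of the selected pair $(s,t^*)$ gives $V(l) = V(l-1) - \tfrac{1}{2}\,[a_s(l-1)-a_{t^*}(l-1)]^2$, using that the update direction $e_s-e_{t^*}$ is orthogonal to $\bar a$ and that $a_s-a_t = (a-\bar a)_s - (a-\bar a)_t$. Because the subtracted term is non-negative, $V(l)$ is non-increasing on every realisation, so as a bounded monotone sequence it converges almost surely to some limit $V_\infty \ge 0$; it then remains only to show $V_\infty = 0$.

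The crux is a uniform geometric contraction of $\mathbb{E}[V(l)\mid a(l-1)]$. Setting $x = a(l-1)-\bar a$ (so $\mathbf{1}^T x = 0$), I would lower-bound the expected decrease by conditioning on a favourable event: node $s$ is drawn with probability $1/N$, and its globally most discrepant neighbour belongs to $\mathcal A_s$ with probability $p$, in which case the greedy rule on $\mathcal A_s$ realises a discrepancy no smaller than that neighbour's. This yields $\mathbb{E}[\text{decrease}\mid a(l-1)] \ge \tfrac{p}{2N}\sum_s \max_{t\in\mathcal N_s}(x_s-x_t)^2 \ge \tfrac{p}{2N d_{\max}}\sum_s\sum_{t\in\mathcal N_s}(x_s-x_t)^2 = \tfrac{p}{N d_{\max}}\,x^T L x$, where $d_{\max}$ is the maximum degree and $L$ the graph Laplacian. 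Connectivity makes $L$ have a simple zero eigenvalue with eigenvector $\mathbf{1}$, so $x^T L x \ge \lambda_2(L)\|x\|^2$ for $x\perp\mathbf{1}$, giving $\mathbb{E}[V(l)\mid a(l-1)] \le (1-c)\,V(l-1)$ with $c = p\,\lambda_2(L)/(N d_{\max}) > 0$. Iterating and taking expectations yields $\mathbb{E}[V(l)] \le (1-c)^l V(0) \to 0$; since $0\le V(l)\le V(0)$ and $V(l)\to V_\infty$, bounded convergence forces $\mathbb{E}[V_\infty]=0$, hence $V_\infty = 0$ and $a(l)\to\bar a$ almost surely.

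I expect the main obstacle to be the state dependence of $W(l)$: unlike randomised gossip, the greedy choice of $t^*$ correlates the averaging matrix with the current iterate, so the standard i.i.d.\ matrix-product machinery does not apply directly. The key that sidesteps this is that greediness can only help, in the sense that the realised decrease is at least that obtained by averaging the most discrepant activated edge; consequently a one-sided lower bound on the expected decrease suffices, and the spectral gap of $L$ converts it into geometric contraction of $V$ regardless of how $W(l)$ depends on $a(l-1)$. For the boundary case $p=0$ the active set is always empty and SGG reduces to randomised gossip, where the same Laplacian estimate applies with the uniform neighbour draw supplying the averaging, again leaving $c>0$.
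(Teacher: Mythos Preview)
Your argument is correct and takes a different, arguably more rigorous, route than the paper. Both proofs begin from the same exact-decrease identity $V(l)=V(l-1)-\tfrac12[a_s(l-1)-a_{t^*}(l-1)]^2$, but the paper then writes out the full expectation of the decrement as a sum over all possible active sets $\mathcal A_s$ (its Eq.~(\ref{eq:ana4})), telescopes, observes that the partial sums of expected decrements are bounded by $\|a(0)-\bar a\|^2$, concludes the decrements must vanish, and finishes with a qualitative appeal to $\bar a$ being the unique stationary point of the recursion. You instead sidestep the combinatorics with a coupling: the single event ``the globally most discrepant neighbour is sampled'' has probability $p$ and already lower-bounds the conditional expected decrement by the greedy one, which you relax to the Laplacian quadratic form and close with the spectral gap $\lambda_2(L)>0$ of the connected graph. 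This buys you an explicit geometric rate $c=p\lambda_2(L)/(Nd_{\max})$ and, combined with pathwise monotonicity of $V$, almost-sure convergence via bounded convergence---a sharper conclusion than the paper actually states. The trade-off is that your one-sided bound carries less structural information: the paper's explicit decrement formula is precisely what it reuses downstream to prove monotonicity in $p$ and the sandwich between randomised and greedy gossip (its Lemma~1 and Theorems~2--4), which your estimate does not directly supply.
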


\begin{proof}
Assume that agents $s$ and the $t$ perform gossip at the $l$th iteration of SGG, then the recursive update of SGG can be obtained as
\begin{equation} \label{eq:ana1} 
a(l)=a(l-1)-\frac{1}{2} g(l)
\end{equation} 
where $g(l) \in {\mathbb R}^N$ is a column vector with its elements being
\begin{equation} \label{eq:ana2} 
g_{i}(l)=\left\{
\begin{array}{cc}{a_{s}(l-1)-a_{t}(l-1),} & {\text { for } i=s} \\ 
{-\left(a_{s}(l-1)-a_{t}(l-1)\right),} & {\text { for } i=t} \\ 
{0,} & {\text { otherwise }}
\end{array}\right.
\end{equation} 

Based on Eq. (\ref{eq:ana1}), the recursive update of the squared error is determined as
\begin{equation} \label{eq:ana3} 
\begin{split}
\|a(l)-\overline{a}\|^{2} &=\left\|a(l-1)-\frac{1}{2} g(l)-\overline{a}\right\|^{2} \\ 
&=\|a(l-1)-\overline{a}\|^{2}-\left[a(l-1)-\overline{a}\right]^T g(l)+\frac{1}{4}\|g(l)\|^{2} \\
&=\|a(l-1)-\overline{a}\|^{2}-\frac{1}{2}\left[a_{s}(l-1)-a_{t}(l-1)\right]^{2}
\end{split}
\end{equation} 

Note that both $s$ and $t$ are random in the proposed SGG. For this reason, we will examine the expected squared error, i.e., $\mathbb{E}\left[\|a(l)-\overline{a}\|^{2}\right]$, in the following analysis. Taking the expectation on $\left[a_{s}(l-1)-a_{t}(l-1)\right]^{2}$ gives
\begin{equation} \label{eq:ana4} 
\begin{split}
\mathbb{E}\left\{\left[a_{s}(l-1)-a_{t}(l-1)\right]^{2}\right\}=&\frac{1}{N}\sum\limits_{s = 1}^N {\sum\limits_{m = 1}^{\left| \mathcal{N}_{s} \right|} {{p^m}{{\left( {1 - p} \right)}^{\left| \mathcal{N}_{s} \right| - m}}\sum\limits_{{\mathcal N_{s,m}} \in \left\{\mathcal N_{s,m}\right\}} {\mathop {\max }\limits_{t \in {\mathcal N_{s,m}}} {{\left[ {{a_s}\left( {l - 1} \right) - {a_t}\left( {l - 1} \right)} \right]}^2}} } }\\
& +\frac{1}{N}\sum\limits_{s = 1}^N {{{\left( {1 - p} \right)}^{\left| \mathcal{N}_{s} \right|}}\frac{1}{{\left| \mathcal{N}_{s} \right|}}\sum\limits_{t \in \mathcal{N}_{s}} {{\left[ {{a_s}\left( {l - 1} \right) - {a_t}\left( {l - 1} \right)} \right]}^2} } 
\end{split}
\end{equation}
where ${\mathcal N}_{s,m}$ denotes a set of $m$ nodes, randomly drawn from $\mathcal N_{s}$ and $\left\{\mathcal N_{s,m}\right\}$ stands for the set that includes all possible $\mathcal N_{s,m}$.

Note that the first term on the right hand side of Eq. (\ref{eq:ana4}) refers to the case where $1 \leq m \leq \left| \mathcal{N}_{s} \right|$ nodes decide to communicate with the $s$th node, and the second term implies the case where no node has been activated by the sampling procedure. As stated in Algorithm \ref{algo:1}, if no node decides to communicate with node $s$ during the sampling phase, we perform the randomised gossip for update. From Eq. (\ref{eq:ana4}), it is clear that $\mathbb{E}\left\{\left[a_{s}(l-1)-a_{t}(l-1)\right]^{2}\right\}\ge0$, where the equality holds if and only if $a(l-1) = \bar a$. This means that, unless all nodes make agreement on the average state $\bar a$, the proposed SGG algorithm will make progress in expectation towards the average state $\bar a$. Moreover, by repeatedly applying recursion (\ref{eq:ana4}) and taking the expectation, we have
\begin{equation} \label{eq:ana4a} 
\begin{split}
\mathbb{E}\left[\|a(l)-\overline{a}\|^{2} \right]&=\mathbb{E}\left[\|a(l-1)-\overline{a}\|^{2} \right]-\frac{1}{2}\mathbb{E}\left\{\left[a_{s}(l-1)-a_{t}(l-1)\right]^{2}\right\} \\
& = \mathbb{E}\left[\|a(0)-\overline{a}\|^{2} \right] - \frac{1}{2}\sum\limits_{i = 1}^l {\mathbb{E}\left\{\left[a_{s}(i-1)-a_{t}(i-1)\right]^{2}\right\}} 
\end{split}
\end{equation} 

Since $\mathbb{E}\left[\|a(l)-\overline{a}\|^{2} \right] \ge 0$, we have
\begin{equation} \label{eq:ana4b} 
\mathbb{E}\left[\|a(0)-\overline{a}\|^{2} \right] \ge \frac{1}{2}\sum\limits_{i = 1}^l {\mathbb{E}\left\{\left[a_{s}(i-1)-a_{t}(i-1)\right]^{2}\right\}}
\end{equation} 
which implies that $\mathbb{E}\left\{\left[a_{s}(l-1)-a_{t}(l-1)\right]^{2}\right\} \to 0$ as $l \to \infty$. As the solution $a(l) = \bar a$ is the only stationary point of stochastic recursion (\ref{eq:ana3}), the proposed SGG algorithm guarantees asymptotic convergence to the average state $\bar a$.
\end{proof}

\subsection{Convergence Rate Analysis}

This subsection analyses convergence rate of the proposed SGG algorithm. We first investigates the effect of node activation probability $p$ on the convergence speed and then derive the convergence bound. 

\subsubsection{Effect of Node Activation Probability $p$}
The tradeoff performance of the proposed SGG algorithm highly depends on the node activation probability $p$. For this reason, we will theoretically analyse the effect of node activation probability $p$ and the result is presented in the following theorem.

\begin{theorem}
The convergence rate of the proposed SGG algorithm increases with the increase of node activation probability.
\end{theorem}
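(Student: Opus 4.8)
The plan is to translate ``faster convergence'' into a monotonicity statement about the one--step expected decrease of the squared error and then to establish that monotonicity directly. From the recursion in the proof of Theorem~1 (Eq.~(\ref{eq:ana4a})), one gossip iteration reduces the expected squared error by exactly $\frac{1}{2}\mathbb{E}\{[a_s(l-1)-a_t(l-1)]^2\}$, so a larger value of this expected discrepancy means a faster per--iteration reduction. Hence it suffices to show that the functional in Eq.~(\ref{eq:ana4}) is non--decreasing in the activation probability $p$. Since that expectation is the average over $s$ of per--node contributions, I would fix a node $s$, prove monotonicity of its contribution $D_s(p)$, and then sum over $s$.

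The key simplification is to reorganise the per--node term of Eq.~(\ref{eq:ana4}). Fixing the current state, write the squared discrepancies in decreasing order $d_{(1)}\ge d_{(2)}\ge\cdots\ge d_{(n)}$, where $n=|\mathcal N_s|$. Because greedy selection on $\mathcal A_s$ always returns the active node of largest discrepancy, the greedy value equals $d_{(j)}$ exactly when nodes $(1),\dots,(j-1)$ are inactive and node $(j)$ is active, which has probability $p(1-p)^{j-1}$; if no node activates (probability $(1-p)^{n}$) the randomised--gossip fallback contributes the average $\bar d_s=\frac{1}{n}\sum_{j}d_{(j)}$. This yields the compact form $D_s(p)=\sum_{j=1}^{n}p(1-p)^{j-1}d_{(j)}+(1-p)^{n}\bar d_s$, a convex combination whose weights sum to one for every $p$.

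Next I would differentiate $D_s(p)$ in $p$. Since the weights sum to one, their derivatives sum to zero, so $D_s'(p)$ is naturally handled by Abel summation using the partial sums $W_k(p)=\sum_{j=1}^{k}p(1-p)^{j-1}=1-(1-p)^{k}$, whence $W_k'(p)=k(1-p)^{k-1}$. Substituting the definition of $\bar d_s$ to absorb the fallback term and re--expressing the excess through the telescoping identity $\sum_{j=1}^{n-1}(d_{(j)}-d_{(n)})=\sum_{k=1}^{n-1}k\,(d_{(k)}-d_{(k+1)})$, the derivative collapses to
\begin{equation}
D_s'(p)=\sum_{k=1}^{n-1}k\left[(1-p)^{k-1}-(1-p)^{n-1}\right]\left(d_{(k)}-d_{(k+1)}\right).
\end{equation}
Every factor is non--negative: $d_{(k)}-d_{(k+1)}\ge 0$ by the ordering, and $(1-p)^{k-1}\ge(1-p)^{n-1}$ because $k-1\le n-1$ with $1-p\in[0,1]$. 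Thus $D_s'(p)\ge 0$, and summing over $s$ shows that the one--step expected decrease grows with $p$, which is the claimed acceleration.

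The step I expect to be the main obstacle is the fallback term. Because $\bar d_s\ge d_{(n)}$, the randomised--gossip contribution can \emph{exceed} the smallest discrepancy, so the naive coupling ``increasing $p$ enlarges $\mathcal A_s$ and can only raise the greedy maximum'' fails precisely on the event $\{\mathcal A_s=\emptyset\}$: passing from the empty set to a singleton of small discrepancy can momentarily lower the selected value. The remedy above is to abandon the pointwise coupling and argue in expectation that the positive greedy terms dominate this negative contribution, which is exactly what the telescoping recombination achieves by producing the uniformly non--negative coefficients $k[(1-p)^{k-1}-(1-p)^{n-1}]$.
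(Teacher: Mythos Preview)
Your argument is correct and takes a genuinely different route from the paper. The paper keeps $f_s(p)$ grouped by subset size $m$, differentiates, introduces a pivot size $m^*=\lfloor|\mathcal N_s|\,p\rfloor$, and bounds each term via the monotonicity in $m$ of the average subset maximum (the inequalities (\ref{eq:anap4})--(\ref{eq:anap5})), finally collapsing everything with the binomial identities $\sum_m mC_n^m p^m(1-p)^{n-m}=np$ and $\sum_m C_n^m p^m(1-p)^{n-m}=1$. You instead re-parametrise by rank: ordering the discrepancies $d_{(1)}\ge\cdots\ge d_{(n)}$ turns the greedy part into the geometric mixture $\sum_j p(1-p)^{j-1}d_{(j)}$, and Abel summation together with the telescoping identity $n(\bar d_s-d_{(n)})=\sum_{k=1}^{n-1}k\,(d_{(k)}-d_{(k+1)})$ yields the closed form $D_s'(p)=\sum_{k=1}^{n-1}k\bigl[(1-p)^{k-1}-(1-p)^{n-1}\bigr]\bigl(d_{(k)}-d_{(k+1)}\bigr)$, whose nonnegativity is immediate term by term. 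Your derivation is more elementary and more transparent (the explicit coefficients make the sign obvious and pinpoint the fallback term as the only delicate piece), while the paper's pivot argument stays closer to the subset-by-size expansion used elsewhere in the analysis (e.g.\ Lemma~1) and avoids imposing an ordering on the neighbours.
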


\begin{proof}
Define
\begin{equation} \label{eq:anap1} 
\begin{split}
f_s(p)=&\sum\limits_{m = 1}^{\left| \mathcal{N}_{s} \right|} {{p^m}{{\left( {1 - p} \right)}^{\left| \mathcal{N}_{s} \right| - m}}\sum\limits_{{\mathcal N_{s,m}} \in \left\{\mathcal N_{s,m}\right\}} {\mathop {\max }\limits_{t \in {\mathcal N_{s,m}}} {{\left[ {{a_s}\left( {l - 1} \right) - {a_t}\left( {l - 1} \right)} \right]}^2}} } \\
& +{{\left( {1 - p} \right)}^{\left| \mathcal{N}_{s} \right|}}\frac{1}{{\left| \mathcal{N}_{s} \right|}}\sum\limits_{t \in \mathcal{N}_{s}} {{\left[ {{a_s}\left( {l - 1} \right) - {a_t}\left( {l - 1} \right)} \right]}^2} 
\end{split}
\end{equation}
From Eqs. (\ref{eq:ana4a}) and (\ref{eq:ana4}), it is clear that the expected reduction of the square is given by
\begin{equation} \label{eq:analp1a}
\mathbb{E}\left[\|a(l-1)-\overline{a}\|^{2} \right]-\mathbb{E}\left[\|a(l)-\overline{a}\|^{2} \right] =  \frac{1}{2N}\sum\limits_{s = 1}^N {f_s(p)}
\end{equation} 
Taking the partial derivative of $f_s(p)$ with respect to $p$ gives
\begin{equation} \label{eq:anap3} 
\begin{split}
\frac{\partial f_s(p)}{\partial p}=&\sum\limits_{m = 1}^{\left| \mathcal{N}_{s} \right|} {{p^{m-1}{{\left( {1 - p} \right)}^{\left| \mathcal{N}_{s} \right| - m-1}} \left( m -\left| \mathcal{N}_{s} \right| p \right)}\sum\limits_{{\mathcal N_{s,m}} \in \left\{\mathcal N_{s,m}\right\}} {\mathop {\max }\limits_{t \in {\mathcal N_{s,m}}} {{\left[ {{a_s}\left( {l - 1} \right) - {a_t}\left( {l - 1} \right)} \right]}^2}} } \\
& -{{{\left( {1 - p} \right)}^{\left| \mathcal{N}_{s} \right| -1}}}\sum\limits_{t \in \mathcal{N}_{s}} {{\left[ {{a_s}\left( {l - 1} \right) - {a_t}\left( {l - 1} \right)} \right]}^2}  \\  
\end{split}
\end{equation}

For a specific $m$, the number of possible $\mathcal N_{s,m}$ is given by
\begin{equation} \label{eq:ana5} 
C_{\left| {{\mathcal N}_s} \right|}^m = \frac{{\left| {{\mathcal N}_s} \right|!}}{{m!\left( {\left| {{\mathcal N}_s} \right| - m} \right)!}}
\end{equation} 
Now let us define $m^* = \lfloor {\left| \mathcal{N}_{s} \right| p} \rfloor $. Then, it is trivial that
\begin{equation} \label{eq:anap4} 
\begin{split}
&( m -\left| \mathcal{N}_{s} \right| p )\sum\limits_{{\mathcal N_{s,m}}\in \left\{ \mathcal N_{s,m}\right\}} {\mathop {\max }\limits_{t \in {\mathcal N_{s,m}}} {{\left[ {{a_s}\left( {l - 1} \right) - {a_t}\left( {l - 1} \right)} \right]}^2}} \\
 &\quad \quad \quad \quad \ge  ( m -\left| \mathcal{N}_{s} \right| p )\frac{C_{\left| \mathcal{N}_{s} \right|}^{m}}{C_{\left| \mathcal{N}_{s} \right|}^{m^*}}  \sum\limits_{{\mathcal N_{s,m^*}}\in \left\{ \mathcal N_{s,m^*}\right\}} {\mathop {\max }\limits_{t \in {\mathcal N_{s,m^*}}} {{\left[ {{a_s}\left( {l - 1} \right) - {a_t}\left( {l - 1} \right)} \right]}^2}} 
 \end{split}
\end{equation}
\begin{equation} \label{eq:anap5} 
\begin{split}
&\sum\limits_{t \in \mathcal{N}_{s}} {{\left[ {{a_s}\left( {l - 1} \right) - {a_t}\left( {l - 1} \right)} \right]}^2}  \\
&\quad \quad \quad \quad \le \frac{\left| \mathcal{N}_{s} \right|}{C_{\left| \mathcal{N}_{s} \right|}^{m^*}}\sum\limits_{{\mathcal N_{s,m^*}}\in \left\{ \mathcal N_{s,m^*}\right\}} {\mathop {\max }\limits_{t \in {\mathcal N_{s,m^*}}} {{\left[ {{a_s}\left( {l - 1} \right) - {a_t}\left( {l - 1} \right)} \right]}^2}} 
\end{split}
\end{equation}
Substituting Eqs. (\ref{eq:anap4}) and (\ref{eq:anap5}) into Eq. (\ref{eq:anap3}) yields 
\begin{equation} \label{eq:anap6} 
\begin{split}
\frac{\partial f_s(p)}{\partial p}
\ge&\sum\limits_{m = 1}^{\left| \mathcal{N}_{s} \right|} {{p^{m-1}{{\left( {1 - p} \right)}^{\left| \mathcal{N}_{s} \right| - m-1}} \left( m -\left| \mathcal{N}_{s} \right| p \right)}\frac{C_{\left| \mathcal{N}_{s} \right|}^{m}}{C_{\left| \mathcal{N}_{s} \right|}^{m^*}} \sum\limits_{{\mathcal N_{s,m^*}} \in \left\{\mathcal N_{s,m^*}\right\}} {\mathop {\max }\limits_{t \in {\mathcal N_{s,m^*}}} {{\left[ {{a_s}\left( {l - 1} \right) - {a_t}\left( {l - 1} \right)} \right]}^2}} } \\
& -{{{\left( {1 - p} \right)}^{\left| \mathcal{N}_{s} \right| -1}}}\frac{\left| \mathcal{N}_{s} \right|}{C_{\left| \mathcal{N}_{s} \right|}^{m^*}}\sum\limits_{{\mathcal N_{s,m^*}}\in \left\{ \mathcal N_{s,m^*}\right\}} {\mathop {\max }\limits_{t \in {\mathcal N_{s,m^*}}} {{\left[ {{a_s}\left( {l - 1} \right) - {a_t}\left( {l - 1} \right)} \right]}^2}}  \\
=&\sum\limits_{m = 0}^{\left| \mathcal{N}_{s} \right|} {{mp^{m}{{\left( {1 - p} \right)}^{\left| \mathcal{N}_{s} \right| - m}} }\frac{C_{\left| \mathcal{N}_{s} \right|}^{m}}{p(1-p)C_{\left| \mathcal{N}_{s} \right|}^{m^*}} \sum\limits_{{\mathcal N_{s,m^*}} \in \left\{\mathcal N_{s,m^*}\right\}} {\mathop {\max }\limits_{t \in {\mathcal N_{s,m^*}}} {{\left[ {{a_s}\left( {l - 1} \right) - {a_t}\left( {l - 1} \right)} \right]}^2}} } \\
&-\sum\limits_{m = 0}^{\left| \mathcal{N}_{s} \right|} {{\left| \mathcal{N}_{s} \right|p^{m}{{\left( {1 - p} \right)}^{\left| \mathcal{N}_{s} \right| - m}} }\frac{C_{\left| \mathcal{N}_{s} \right|}^{m}}{(1-p)C_{\left| \mathcal{N}_{s} \right|}^{m^*}} \sum\limits_{{\mathcal N_{s,m^*}} \in \left\{\mathcal N_{s,m^*}\right\}} {\mathop {\max }\limits_{t \in {\mathcal N_{s,m^*}}} {{\left[ {{a_s}\left( {l - 1} \right) - {a_t}\left( {l - 1} \right)} \right]}^2}} } \\
\end{split}
\end{equation}
Since
\begin{equation} \label{eq:anap7} 
\begin{split}
&\sum\limits_{m = 0}^{\left| \mathcal{N}_{s} \right|} {mC_{\left| \mathcal{N}_{s} \right|}^{m}p^{m}{{\left( {1 - p} \right)}^{\left| \mathcal{N}_{s} \right| - m}} } =\left| \mathcal{N}_{s} \right|p\\
&\sum\limits_{m = 0}^{\left| \mathcal{N}_{s} \right|} {C_{\left| \mathcal{N}_{s} \right|}^{m}p^{m}{{\left( {1 - p} \right)}^{\left| \mathcal{N}_{s} \right| - m}} } =1
\end{split}
\end{equation}
evaluating Eq. (\ref{eq:anap6}) using Eq. (\ref{eq:anap7}) gives
\begin{equation} \label{eq:anap8} 
\frac{\partial f_s(p)}{\partial p}\ge0
\end{equation}
where the equality holds if and only if $a(l-1)=\bar a$. From Eqs. (\ref{eq:analp1a}) and (\ref{eq:anap8}), we have
\begin{equation} \label{eq:analp7a}
\frac{\partial \left( \mathbb{E}\left[\|a(l-1)-\overline{a}\|^{2} \right]-\mathbb{E}\left[\|a(l)-\overline{a}\|^{2} \right] \right)}{\partial p} =  \frac{1}{2N}\sum\limits_{s = 1}^N \frac{{\partial {f_s(p)}}}{\partial p} \ge 0
\end{equation} 
where the equality again holds if and only if $a(l-1)=\bar a$.
\end{proof}

\begin{remark}
Theorem 2 demonstrates that the proposed SGG algorithm provides faster convergence rate with higher node activation probability. This clearly reveals that SGG is a tradeoff between randomised gossip $(p=0)$ and greedy gossip $(p=1)$.  In viewing of this fact, it can be concluded that the proposed SGG algorithm provides great flexibility and well balance between communication cost and convergence performance. If enough resource is available for communication, then a higher node activation probability can be chosen to increase the convergence rate; otherwise, a relatively small value of $p$ is desirable.
\end{remark}

\subsubsection{Convergence Bound}
To analyse the bound on the convergence rate, we will first derive the lower and upper convergence bounds of each SGG iteration. The result is established in the following lemma.
\begin{lemma}
The expected reduction of the square error of each SGG iteration is lower and upper bounded by
\begin{equation} \label{eq:anal1a} 
\mathbb{E}\left[\|a(l-1)-\overline{a}\|^{2} \right]-\mathbb{E}\left[\|a(l)-\overline{a}\|^{2} \right] \ge  \frac{1}{2N}\sum\limits_{s = 1}^N {\frac{1}{{\left| {{{\mathcal N}_s}} \right|}}\sum\limits_{t \in {{\mathcal N}_s}} {{{\left[ {{a_s}\left( {l - 1} \right) - {a_t}\left( {l - 1} \right)} \right]}^2}} }  
\end{equation} 
\begin{equation} \label{eq:anal1b} 
\mathbb{E}\left[\|a(l-1)-\overline{a}\|^{2} \right]-\mathbb{E}\left[\|a(l)-\overline{a}\|^{2} \right]  \le \frac{1}{2N}\sum\limits_{s = 1}^N {\mathop {\max }\limits_{t \in {\mathcal N_{s}}} {{\left[ {{a_s}\left( {l - 1} \right) - {a_t}\left( {l - 1} \right)} \right]}^2}  }
\end{equation} 
where the equality holds if and only if $a(l-1)=\bar a$.
\end{lemma}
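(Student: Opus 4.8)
The plan is to start from the exact per-iteration reduction already derived in Eq.~(\ref{eq:analp1a}), namely $\mathbb{E}[\|a(l-1)-\overline{a}\|^{2}]-\mathbb{E}[\|a(l)-\overline{a}\|^{2}]=\frac{1}{2N}\sum_{s=1}^{N}f_s(p)$, with $f_s(p)$ given by Eq.~(\ref{eq:anap1}). Since the prefactor $\frac{1}{2N}$ and the outer sum over $s$ are common to both sides of the two claimed inequalities, it suffices to bound each $f_s(p)$ individually: from below by the neighbour-average $\frac{1}{|\mathcal{N}_s|}\sum_{t\in\mathcal{N}_s}[a_s(l-1)-a_t(l-1)]^2$ and from above by the neighbour-maximum $\max_{t\in\mathcal{N}_s}[a_s(l-1)-a_t(l-1)]^2$. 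Intuitively these say that greedily selecting within a random active subset is, in expectation, never worse than a uniform pick and never better than the full greedy pick; the real work is to make the combinatorial bookkeeping in $f_s(p)$ collapse to exactly these two quantities.

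For the upper bound I would bound the inner maximum over any drawn subset by the global neighbour-maximum, $\max_{t\in\mathcal{N}_{s,m}}[\cdots]^2\le\max_{t\in\mathcal{N}_{s}}[\cdots]^2$, and likewise bound the no-activation (uniform-average) term by the same global maximum, since an average never exceeds a maximum. Because there are exactly $C_{|\mathcal{N}_s|}^{m}$ subsets of size $m$ (Eq.~(\ref{eq:ana5})), the activation part of $f_s(p)$ is then at most $\max_{t}[\cdots]^2\sum_{m=1}^{|\mathcal{N}_s|}C_{|\mathcal{N}_s|}^{m}p^{m}(1-p)^{|\mathcal{N}_s|-m}=\max_{t}[\cdots]^2(1-(1-p)^{|\mathcal{N}_s|})$ by the binomial identity in Eq.~(\ref{eq:anap7}), while the no-activation part is at most $(1-p)^{|\mathcal{N}_s|}\max_{t}[\cdots]^2$. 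The two weights sum to $1$, giving $f_s(p)\le\max_{t\in\mathcal{N}_s}[\cdots]^2$.

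For the lower bound the key step is the reverse estimate $\max_{t\in\mathcal{N}_{s,m}}[\cdots]^2\ge\frac{1}{m}\sum_{t\in\mathcal{N}_{s,m}}[\cdots]^2$ applied subset by subset, after which I would interchange the order of summation. Each fixed neighbour $t\in\mathcal{N}_s$ lies in exactly $C_{|\mathcal{N}_s|-1}^{m-1}$ of the size-$m$ subsets, so $\sum_{\mathcal{N}_{s,m}}\frac{1}{m}\sum_{t\in\mathcal{N}_{s,m}}[\cdots]^2=\frac{1}{m}C_{|\mathcal{N}_s|-1}^{m-1}\sum_{t\in\mathcal{N}_s}[\cdots]^2$. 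The combinatorial identity $\frac{1}{m}C_{|\mathcal{N}_s|-1}^{m-1}=\frac{1}{|\mathcal{N}_s|}C_{|\mathcal{N}_s|}^{m}$ recasts this as $C_{|\mathcal{N}_s|}^{m}$ times the neighbour-average; reinserting the weights $p^{m}(1-p)^{|\mathcal{N}_s|-m}$ and summing the binomial series shows the activation part is at least $(1-(1-p)^{|\mathcal{N}_s|})$ times the neighbour-average, while the no-activation part equals \emph{exactly} $(1-p)^{|\mathcal{N}_s|}$ times the same neighbour-average. Adding them gives $f_s(p)\ge\frac{1}{|\mathcal{N}_s|}\sum_{t\in\mathcal{N}_s}[\cdots]^2$.

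Summing over $s$ and restoring $\frac{1}{2N}$ then yields both displayed inequalities. The equality statement follows by tracking tightness of the two pointwise estimates used (global-max versus subset-max and average versus max), each of which becomes an equality precisely when the discrepancies $[a_s(l-1)-a_t(l-1)]^2$ no longer distinguish the candidate nodes, which holds in particular at consensus $a(l-1)=\bar a$ as claimed. I expect the main obstacle to be the lower-bound accounting: getting the double-counting constant right through $\frac{1}{m}C_{|\mathcal{N}_s|-1}^{m-1}=\frac{1}{|\mathcal{N}_s|}C_{|\mathcal{N}_s|}^{m}$ and verifying that the activation and no-activation weights recombine to exactly $1$, so that the bound collapses cleanly onto the neighbour-average rather than leaving a residual $(1-p)^{|\mathcal{N}_s|}$ term.
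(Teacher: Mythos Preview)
Your proposal is correct and follows essentially the same route as the paper's proof: both bound each $f_s(p)$ termwise using $\max_{t\in\mathcal N_{s,m}}\le\max_{t\in\mathcal N_s}$ for the upper bound and $\max_{t\in\mathcal N_{s,m}}\ge\frac1m\sum_{t\in\mathcal N_{s,m}}$ for the lower bound, then interchange the subset sum with the neighbour sum and collapse the binomial weights via Eq.~(\ref{eq:anap7}). The only cosmetic difference is that the paper writes the number of size-$m$ subsets containing a fixed $t$ as $M_{s,m}^t=\frac{m}{|\mathcal N_s|}C_{|\mathcal N_s|}^m$, whereas you write it as $C_{|\mathcal N_s|-1}^{m-1}$; your identity $\frac{1}{m}C_{|\mathcal N_s|-1}^{m-1}=\frac{1}{|\mathcal N_s|}C_{|\mathcal N_s|}^{m}$ is exactly the same computation.
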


\begin{proof}
Note that the probability that $\mathcal N_{s,m}$ contains one specific node $t \in \mathcal N_{s}$ is
\begin{equation} \label{eq:ana6} 
\Pr  \left(t \in \mathcal N_{s}  \cup \mathcal N_{s,m} \right) = \frac{m}{\left| \mathcal{N}_{s} \right|}
\end{equation} 
Let $M_{s,m}^t$ be the number of occurrences of one specific node $t \in \mathcal N_{s}$ that appears in $\left\{\mathcal N_{s,m}\right\}$\footnote{ Here, we provide an example to explain the notation $M_{s,m}^t$. Consider ${\mathcal N}_s = \left\{ 1,2,3,4 \right\}$ and $m=2$. From the definition of ${\mathcal N}_{s,m}$, it can be concluded that ${\mathcal N}_{s,2}$ in this example has $C_4^2 = 6$ possibilities, i.e., $\left\{ 1,2\right\}$, $\left\{ 1,3\right\}$, $\left\{ 1,4\right\}$,  $\left\{ 2,3\right\}$, $\left\{ 2,4\right\}$, $\left\{ 3,4\right\}$. Therefore, we have $\left\{{\mathcal N}_{s,2}\right\} = \left\{ \left\{ 1,2\right\}, \left\{ 1,3\right\}, \left\{ 1,4\right\}, \left\{ 2,3\right\}, \left\{ 2,4\right\}, \left\{ 3,4\right\} \right\}$ and the number of occurrences of each node index in $\left\{{\mathcal N}_{s,2}\right\}$ is given by $M_{s,2}^1=M_{s,2}^2=M_{s,2}^3=M_{s,2}^4=\frac{2}{4}C_4^2=3$.},  which is given by
\begin{equation} \label{eq:ana7} 
M_{s,m}^t=\Pr\left(t \in \mathcal N_{s}  \cup \mathcal N_{s,m} \right)C_{\left| {{\mathcal N}_s} \right|}^m = \frac{m}{\left| \mathcal{N}_{s} \right|} C_{\left| {{\mathcal N}_s} \right|}^m
\end{equation} 
Using Eq. (\ref{eq:ana7}), we obtain
\begin{equation} \label{eq:ana8} 
\begin{split}
\sum\limits_{{\mathcal N_{s,m}}\in \left\{ \mathcal N_{s,m}\right\}} {\mathop {\max }\limits_{t \in {\mathcal N_{s,m}}} {{\left[ {{a_s}\left( {l - 1} \right) - {a_t}\left( {l - 1} \right)} \right]}^2}}
 & \ge \sum\limits_{{\mathcal N_{s,m}}\in \left\{ \mathcal N_{s,m}\right\}} {\frac{1}{m}\sum\limits_{t \in {\mathcal N_{s,m}}} {{{\left[ {{a_s}\left( {l - 1} \right) - {a_t}\left( {l - 1} \right)} \right]}^2}} }  \\ 
 & = \frac{1}{m}\sum\limits_{t \in {\mathcal N_{s}}} {{{M_{s,m}^t}{\left[ {{a_s}\left( {l - 1} \right) - {a_t}\left( {l - 1} \right)} \right]}^2}}  \\ 
  &= \frac{1}{{\left| {{{\mathcal N}_s}} \right|}}C_{\left| {{\mathcal N_{s}}} \right|}^m\sum\limits_{t \in {{\mathcal N}_s}} {{{\left[ {{a_s}\left( {l - 1} \right) - {a_t}\left( {l - 1} \right)} \right]}^2}}  
\end{split}
\end{equation}
Substituting Eq. (\ref{eq:ana8}) into Eq. (\ref{eq:ana4}) yields
\begin{equation} \label{eq:ana9} 
\begin{split}
\mathbb{E}\left\{\left[a_{s}(l-1)-a_{t}(l-1)\right]^{2}\right\}\ge&\frac{1}{N}\sum\limits_{s = 1}^N {\sum\limits_{m = 1}^{\left| {{\mathcal N_{s}}} \right|} {C_{\left| {{\mathcal N_{s}}} \right|}^m{p^m}{{\left( {1 - p} \right)}^{\left| {{\mathcal N_{s}}} \right| - m}}\frac{1}{{\left| {{\mathcal N_{s}}} \right|}}\sum\limits_{t \in {\mathcal N_{s}}} {{{\left[ {{a_s}\left( {l - 1} \right) - {a_t}\left( {l - 1} \right)} \right]}^2}} } } \\
& +\frac{1}{N}\sum\limits_{s = 1}^N {{{\left( {1 - p} \right)}^{\left| \mathcal{N}_{s} \right|}}\frac{1}{{\left| \mathcal{N}_{s} \right|}}\sum\limits_{t \in \mathcal{N}_{s}} {{\left[ {{a_s}\left( {l - 1} \right) - {a_t}\left( {l - 1} \right)} \right]}^2} } \\ 
  = &\frac{1}{N}\sum\limits_{s = 1}^N {\frac{1}{{\left| {{{\mathcal N}_s}} \right|}}\sum\limits_{t \in {{\mathcal N}_s}} {{{\left[ {{a_s}\left( {l - 1} \right) - {a_t}\left( {l - 1} \right)} \right]}^2}} \sum\limits_{m = 0}^{\left| {{{\mathcal N}_s}} \right|} {C_{\left| {{{\mathcal N}_s}} \right|}^m{p^m}{{\left( {1 - p} \right)}^{\left| {{{\mathcal N}_s}} \right| - m}}} }  \\ 
  = &\frac{1}{N}\sum\limits_{s = 1}^N {\frac{1}{{\left| {{{\mathcal N}_s}} \right|}}\sum\limits_{t \in {{\mathcal N}_s}} {{{\left[ {{a_s}\left( {l - 1} \right) - {a_t}\left( {l - 1} \right)} \right]}^2}} }  
\end{split}
\end{equation}
where the equality holds if and only if $a(l-1)=\bar a$.

Also, $\mathbb{E}\left\{\left[a_{s}(l-1)-a_{t}(l-1)\right]^{2}\right\}$ can be upper bounded by
\begin{equation} \label{eq:ana9a} 
\begin{split}
\mathbb{E}\left\{\left[a_{s}(l-1)-a_{t}(l-1)\right]^{2}\right\}\le&\frac{1}{N}\sum\limits_{s = 1}^N {\sum\limits_{m = 1}^{\left| \mathcal{N}_{s} \right|} {C_{\left| {{\mathcal N_{s}}} \right|}^m{p^m}{{\left( {1 - p} \right)}^{\left| \mathcal{N}_{s} \right| - m}} \mathop {\max }\limits_{t \in {\mathcal N_{s}}} {{\left[ {{a_s}\left( {l - 1} \right) - {a_t}\left( {l - 1} \right)} \right]}^2} } }\\
& +\frac{1}{N}\sum\limits_{s = 1}^N {{{\left( {1 - p} \right)}^{\left| \mathcal{N}_{s} \right|}}\mathop {\max }\limits_{t \in {\mathcal N_{s}}} {{\left[ {{a_s}\left( {l - 1} \right) - {a_t}\left( {l - 1} \right)} \right]}^2} } \\
=&\frac{1}{N}\sum\limits_{s = 1}^N {\mathop {\max }\limits_{t \in {\mathcal N_{s}}} {{\left[ {{a_s}\left( {l - 1} \right) - {a_t}\left( {l - 1} \right)} \right]}^2}  \sum\limits_{m = 0}^{\left| {{{\mathcal N}_s}} \right|} {C_{\left| {{{\mathcal N}_s}} \right|}^m{p^m}{{\left( {1 - p} \right)}^{\left| {{{\mathcal N}_s}} \right| - m}}}}\\
=&\frac{1}{N}\sum\limits_{s = 1}^N {\mathop {\max }\limits_{t \in {\mathcal N_{s}}} {{\left[ {{a_s}\left( {l - 1} \right) - {a_t}\left( {l - 1} \right)} \right]}^2}  }
\end{split}
\end{equation}
where the equality holds if and only if $a(l-1)=\bar a$. Combining the preceding two equations completes the proof.
\end{proof}

\begin{remark}
Note that the right hand side of inequality (\ref{eq:anal1a}) is the expected reduction of square error of randomised gossip and the right hand side of inequality (\ref{eq:anal1b}) is the expected reduction of square error of greedy gossip. Lemma 1 relates the performance of the proposed SGG to that of randomised gossip and greedy gossip. From Lemma 1, it is clear that the expected reduction of the square error of the proposed SGG algorithm is lower bounded by the randomised gossip and upper bounded by the greedy gossip. With this in mind, it can be concluded that the convergence rate of SGG is faster than the randomised gossip but slower than the greedy gossip in expectation.
\end{remark}

Now, let us directly examine the convergence speed of the proposed SGG algorithm in terms of the $\epsilon$-average time \cite{boyd2006randomized},
\begin{equation}
T_{ave}(\epsilon)=\sup _{a(0) \neq 0} \inf \left\{l : \operatorname{Pr}\left(\frac{\|a(l)-\overline{a}\|}{\|a(0)-\overline{a}\|} \geq \epsilon\right) \leq \epsilon\right\}
\end{equation}

For notational convenience, let $W^{SGG}\left(l\right)$ be the update transition matrix of implementing SGG at the $k$th iteration of SGG, e.g., $a\left(l\right)=W^{SGG}\left(l\right) a\left(l-1\right)$, and $W^{SGG}\left(1 : l\right)=\prod_{j=1}^{l} W^{SGG}\left(j\right)$ be the successive application of $l$ SGG updates. Likewise, we denote the one-step update matrices of randomised gossip as $W^{RG}\left(l\right)$ and define the application of $k$ successive randomised gossip updates  as $W^{RG}\left(1 : l\right)=\prod_{j=1}^{l} W^{RG}\left(j\right)$. We further denote $  \overline{W}=\mathbb{E}\left[W^{RG}\left(l\right)\right]  $ as the expected value of the one-step update matrix in the randomised gossip, and let $\lambda_2\left(\overline{W}\right)$ be  the second largest eigenvalue of $\overline{W}$. The following theorem establishes the theoretical upper bound of $\epsilon$-average time of the proposed SGG algorithm.

\begin{theorem}
The $\epsilon$-average time of the proposed SGG algorithm is upper bounded by
\begin{equation} \label{eq:anathsgg} 
T_{ave}^{SGG}(\epsilon) \leq \frac{3 \log \epsilon^{-1}}{\log{\left(\lambda_{2}(\overline{W})-\min\limits_{i \in [l]}\left\{\beta_{i}\right\}\right)^{-1}}}
\end{equation}
where 
\begin{equation} 
\beta_l = \frac{\eta_l}{\mathbb{E}\left[\left\|W^{SGG}(1 : l-1) a(0)-\overline{a}\right\|^{2}\right]} \ge 0
\end{equation}
with $\eta_l = 0$ if $a(l-1)=\bar a$, and otherwise,
\begin{equation} \label{eq:anathsgg1} 
\begin{split}
\eta_l = &\frac{1}{2N}\sum\limits_{s = 1}^N {\sum\limits_{m = 1}^{\left| \mathcal{N}_{s} \right|} {{p^m}{{\left( {1 - p} \right)}^{\left| \mathcal{N}_{s} \right| - m}}\sum\limits_{{\mathcal N_{s,m}} \in \left\{\mathcal N_{s,m}\right\}} {\mathop {\max }\limits_{t \in {\mathcal N_{s,m}}} {{\left[ {{a_s}\left( {l - 1} \right) - {a_t}\left( {l - 1} \right)} \right]}^2}} } }\\
& +\frac{1}{2N}\sum\limits_{s = 1}^N {{{\left( {1 - p} \right)}^{\left| \mathcal{N}_{s} \right|}}\frac{1}{{\left| \mathcal{N}_{s} \right|}}\sum\limits_{t \in \mathcal{N}_{s}} {{\left[ {{a_s}\left( {l - 1} \right) - {a_t}\left( {l - 1} \right)} \right]}^2} } \\
&-\frac{1}{2N} \sum_{s=1}^{n} \frac{1}{\left|\mathcal{N}_{s}\right|} \sum_{t \in \mathcal{N}_{s}}\left(a_{s}(l-1)-a_{t}(l-1)\right)^{2}
\end{split}
\end{equation}  
\end{theorem}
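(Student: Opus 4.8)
The plan is to follow the spectral-contraction argument of Boyd \emph{et al.} \cite{boyd2006randomized}, treating $\eta_l$ as the surplus per-step decrease that SGG enjoys over plain randomised gossip. First I would identify what $\eta_l$ is: the first two terms in its definition are exactly $\frac{1}{2}$ times the conditional expected discrepancy $\mathbb{E}\{[a_s(l-1)-a_t(l-1)]^2\}$ of SGG from Eq.~(\ref{eq:ana4}), while the subtracted third term is $\frac{1}{2}$ times the conditional expected discrepancy of randomised gossip, i.e.\ the lower bound established in Lemma~1. Hence $\eta_l$ equals the expected per-step square-error reduction of SGG minus that of randomised gossip, and $\eta_l\ge 0$ by Lemma~1, with equality iff $a(l-1)=\bar a$.

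Next I would rewrite one SGG step as a randomised-gossip step penalised by $\eta_l$. Each randomised-gossip averaging matrix $W^{RG}$ is a symmetric projection, so $(W^{RG})^T W^{RG}=W^{RG}$, which gives $\mathbb{E}[\|a(l)-\bar a\|^{2}\mid a(l-1)]=(a(l-1)-\bar a)^T \overline{W}(a(l-1)-\bar a)$ for the randomised part. Subtracting the surplus $\eta_l$, and using $a(l-1)-\bar a \perp \mathbf{1}$ together with the definition of $\lambda_2(\overline{W})$, yields $\mathbb{E}[\|a(l)-\bar a\|^{2}]\le \lambda_2(\overline{W})\,\mathbb{E}[\|a(l-1)-\bar a\|^{2}] - \eta_l$. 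Dividing through by $\mathbb{E}[\|a(l-1)-\bar a\|^{2}]=\mathbb{E}[\|W^{SGG}(1:l-1)a(0)-\bar a\|^{2}]$ converts this into the one-step contraction $\mathbb{E}[\|a(l)-\bar a\|^{2}]\le (\lambda_2(\overline{W})-\beta_l)\,\mathbb{E}[\|a(l-1)-\bar a\|^{2}]$.

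I would then iterate this contraction from $l$ down to $0$, replacing each factor by the worst one to obtain $\mathbb{E}[\|a(l)-\bar a\|^{2}]\le (\lambda_2(\overline{W})-\min_{i\in[l]}\beta_i)^{l}\,\|a(0)-\bar a\|^{2}$. Finally I would convert this geometric decay into an $\epsilon$-average time bound in the standard way: Markov's inequality gives $\Pr(\|a(l)-\bar a\|/\|a(0)-\bar a\|\ge\epsilon)\le (\lambda_2(\overline{W})-\min_i\beta_i)^{l}/\epsilon^{2}$, and requiring the right-hand side to be at most $\epsilon$ forces $(\lambda_2(\overline{W})-\min_i\beta_i)^{l}\le \epsilon^{3}$; solving for $l$ and flipping the signs of the two negative logarithms produces exactly the claimed bound~(\ref{eq:anathsgg}).

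The main obstacle will be the careful handling of $\eta_l$ as a random quantity: it is a function of the realised state $a(l-1)$, whereas $\beta_l$ normalises it by the deterministic quantity $\mathbb{E}[\|a(l-1)-\bar a\|^{2}]$. Making the factorisation $\mathbb{E}[\|a(l)-\bar a\|^{2}]\le(\lambda_2(\overline{W})-\beta_l)\mathbb{E}[\|a(l-1)-\bar a\|^{2}]$ fully rigorous requires either interpreting $\eta_l$ as the \emph{expected} surplus reduction and invoking the tower property, or verifying that the normalisation is performed after taking expectations; this bookkeeping, rather than any new inequality, is the delicate step.
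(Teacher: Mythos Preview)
Your proposal is correct and follows essentially the same route as the paper: identify $\eta_l$ as the surplus per-step decrease of SGG over randomised gossip (with nonnegativity from Lemma~1), combine this with the spectral contraction $\lambda_2(\overline{W})$ of randomised gossip to obtain the one-step bound $\mathbb{E}[\|a(l)-\bar a\|^2]\le(\lambda_2(\overline{W})-\beta_l)\mathbb{E}[\|a(l-1)-\bar a\|^2]$, iterate, and finish with Markov's inequality. The only cosmetic difference is that the paper imports the $\lambda_2(\overline{W})$ contraction directly from \cite{boyd2006randomized} (via the device of applying one hypothetical randomised-gossip step after $l-1$ SGG steps), whereas you re-derive it from the projection identity $(W^{RG})^TW^{RG}=W^{RG}$; the remaining steps and the bookkeeping caveat you flag are identical.
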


\begin{proof}
The recursive reduction of the expected square error of randomised gossip satisfies \cite{boyd2006randomized}
\begin{equation} \label{eq:ana11} 
\begin{split}
&\mathbb{E}\left[\left\|W^{RG}(1 : l) a(0)-\overline{a}\right\|^{2}\right] =\mathbb{E}\left[\left\|W^{R G}(1 : l-1) a(0)-\overline{a}\right\|^{2}\right]\\ 
&\quad \quad \quad -\frac{1}{2N} \sum_{s=1}^{n} \frac{1}{\left|\mathcal{N}_{s}\right|} \sum_{t \in \mathcal{N}_{s}}\left(a_{s}(l-1)-a_{t}(l-1)\right)^{2}\\ 
 &\quad \quad \leq \lambda_{2}(\overline{W}) \mathbb{E}\left[\left\|W^{R G}(1 : l-1) a(0)-\overline{a}\right\|^{2}\right]
\end{split}
\end{equation} 
By implementing one step randomised gossip after applying $l-1$ steps of SGG updates and using the relationship of Eq. (\ref{eq:ana11}), we have 
\begin{equation} \label{eq:ana12} 
\begin{split}
&\mathbb{E}\left[\left\|W^{RG}(k) W^{SGG}(1 : l-1) a(0)-\overline{a}\right\|^{2}\right]=\mathbb{E}\left[\left\|W^{SGG}(1 : l-1) a(0)-\overline{a}\right\|^{2}\right]  \\ 
&\quad \quad \quad -\frac{1}{2N} \sum_{s=1}^{n} \frac{1}{\left|\mathcal{N}_{s}\right|} \sum_{t \in \mathcal{N}_{s}}\left(a_{s}(l-1)-a_{t}(l-1)\right)^{2}\\ 
& \quad \quad \leq \lambda_{2}(\overline{W}) \mathbb{E}\left[\left\|W^{SGG}(1 : l-1) a(0)-\overline{a}\right\|^{2}\right]
\end{split}
\end{equation} 
Using Eq. (\ref{eq:ana12}), the convergence of the proposed SGG can be obtained as
\begin{equation} \label{eq:ana13} 
\begin{split}
&\mathbb{E}\left[\left\|W^{SGG}(1 : l) a(0)-\overline{a}\right\|^{2}\right] \\ 
&=\mathbb{E}\left[\left\|W^{SGG}(1 : l-1) a(0)-\overline{a}\right\|^{2}\right] -\frac{1}{2N} \sum_{s=1}^{n} \frac{1}{\left|\mathcal{N}_{s}\right|} \sum_{t \in \mathcal{N}_{s}}\left(a_{s}(l-1)-a_{t}(l-1)\right)^{2}\\ 
&\quad \quad -\frac{1}{2N}\sum\limits_{s = 1}^N {\sum\limits_{m = 1}^{\left| \mathcal{N}_{s} \right|} {{p^m}{{\left( {1 - p} \right)}^{\left| \mathcal{N}_{s} \right| - m}}\sum\limits_{{\mathcal N_{s,m}} \in \left\{\mathcal N_{s,m}\right\}} {\mathop {\max }\limits_{t \in {\mathcal N_{s,m}}} {{\left[ {{a_s}\left( {l - 1} \right) - {a_t}\left( {l - 1} \right)} \right]}^2}} } }\\
& \quad \quad -\frac{1}{2N}\sum\limits_{s = 1}^N {{{\left( {1 - p} \right)}^{\left| \mathcal{N}_{s} \right|}}\frac{1}{{\left| \mathcal{N}_{s} \right|}}\sum\limits_{t \in \mathcal{N}_{s}} {{\left[ {{a_s}\left( {l - 1} \right) - {a_t}\left( {l - 1} \right)} \right]}^2} } \\
&\quad \quad +\frac{1}{2N} \sum_{s=1}^{n} \frac{1}{\left|\mathcal{N}_{s}\right|} \sum_{t \in \mathcal{N}_{s}}\left(a_{s}(l-1)-a_{t}(l-1)\right)^{2} \\
& = \mathbb{E}\left[\left\|W^{SGG}(1 : l-1) a(0)-\overline{a}\right\|^{2}\right] -\frac{1}{2N} \sum_{s=1}^{n} \frac{1}{\left|\mathcal{N}_{s}\right|} \sum_{t \in \mathcal{N}_{s}}\left(a_{s}(l-1)-a_{t}(l-1)\right)^{2} - \eta_l 
\end{split}
\end{equation} 
According to Lemma 1, it is clear that $\eta_l \ge 0$ and the equality holds if and only if $a(l-1)=\bar a$. With this in mind, substituting Eq. (\ref{eq:ana12}) into Eq. (\ref{eq:ana13}) gives the convergence bound of the proposed SGG as
\begin{equation} \label{eq:ana15} 
\begin{split}
&\mathbb{E}\left[\left\|W^{SGG}(1 : l) a(0)-\overline{a}\right\|^{2}\right] \\ 
&\quad \quad \quad \quad  \leq \left[\lambda_{2}(\overline{W})-\beta_{l}\right] \mathbb{E}\left[\left\|W^{SGG}(1 : l-1) a(0)-\overline{a}\right\|^{2}\right]
\end{split}
\end{equation} 
Repeatedly using Eq. (\ref{eq:ana15}) yields
\begin{equation} \label{eq:ana16} 
\begin{split}
\mathbb{E}\left[\left\|W^{SGG}(1 : l) a(0)-\overline{a}\right\|^{2}\right] &\leq\|a(0)-\overline{a}\|^{2} \prod_{i=1}^{l}\left(\lambda_{2}(\overline{W})-\beta_{i}\right)\\
 &\leq \|a(0)-\overline{a}\|^{2}\left(\lambda_{2}(\overline{W})-\min\limits_{i \in [l]}\left\{\beta_{i}\right\}\right)^l
\end{split}
\end{equation} 
where the equality holds if and only if $a(l-1)=\bar a$.

Using Eq. (\ref{eq:ana16}) with the help of Markov's inequality, we have
\begin{equation} \label{eq:ana16a} 
\begin{split}
\Pr\left(\frac{\|W^{SGG}(1 : l) a(0)-\overline{a}\|}{\|a(0)-\overline{a}\|} \geq \epsilon\right) &= \Pr \left(\frac{\|W^{SGG}(1 : l) a(0)-\overline{a}\|^2}{\|a(0)-\overline{a}\|^2} \geq \epsilon^2\right)\\
&\leq \frac{\mathbb{E}\left[\|W^{SGG}(1 : l) a(0)-\overline{a}\|^{2}\right]}{\epsilon^{2}\|a(0)-\overline{a}\|^{2}} \\
 &\le \epsilon^{-2} \left(\lambda_{2}(\overline{W})-\min\limits_{i \in [l]}\left\{\beta_{i}\right\}\right)^l
\end{split}
\end{equation} 
which means that the $\epsilon$-averaging time for SGG is upper bounded by
\begin{equation} \label{eq:ana16b} 
T_{ave}^{SGG}(\epsilon) \leq \frac{3 \log \epsilon^{-1}}{\log{\left(\lambda_{2}(\overline{W})-\min\limits_{i \in [l]}\left\{\beta_{i}\right\}\right)^{-1}}}
\end{equation}
\end{proof}

\begin{remark}
Theorem 3 provides a theoretical upper bound for the $\epsilon$-averaging time of the proposed SGG algorithm. Note from Eqs. (\ref{eq:anap1}) and (\ref{eq:anathsgg1})  that $\eta_l$ and $f_s(p)$ have the same monotonicity with respect to $p$. This verifies that the upper bound of the $\epsilon$-averaging time becomes smaller with higher node activation probability.
\end{remark}

\subsection{Convergence Rate Comparison}
This subsection will characterise the upper bound of $\epsilon$-averaging time of SGG algorithm with respect to other gossip algorithms to provide better understandings and insights of the proposed algorithm. Similar to previous subsection, we denote the one-step update matrices of greedy gossip as $W^{GG}\left(l\right)$; and define the application of $k$ successive greedy gossip updates  as $W^{GG}\left(1 : l\right)=\prod_{j=1}^{l} W^{GG}\left(j\right)$. The main results of this subsection are provided in the following theorem.

\begin{theorem}
The upper bound of $\epsilon$-average time, denoted as ${\mathcal U}(\epsilon)$, of the proposed SGG algorithm is smaller than that of the randomised gossip, but larger than that of the greedy gossip, i.e.,
\begin{equation}\label{eq:anath1}
{\mathcal U}_{GG}(\epsilon) \le {\mathcal U}_{SGG}(\epsilon) \le {\mathcal U}_{RG}(\epsilon) 
\end{equation} 
\end{theorem}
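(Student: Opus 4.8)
The plan is to recognise all three upper bounds as instances of the single formula produced by Theorem 3, and then to compare them through the monotone dependence of that formula on one contraction parameter. Write the SGG bound as
$$\mathcal{U}_{SGG}(\epsilon) = \frac{3\log\epsilon^{-1}}{\log\left(\lambda_2(\overline{W}) - \min_{i\in[l]}\{\beta_i^{SGG}\}\right)^{-1}},$$
and observe that exactly the same derivation, specialised to $p=0$ and $p=1$, yields the randomised and greedy bounds. At $p=0$ the reduction term $\eta_l$ in (\ref{eq:anathsgg1}) collapses to zero, so $\beta_i^{RG}=0$ and $\mathcal{U}_{RG}(\epsilon) = 3\log\epsilon^{-1}/\log(\lambda_2(\overline{W}))^{-1}$, recovering the classical bound of \cite{boyd2006randomized}; at $p=1$ only the $\max$-term survives, producing the greedy reduction $\beta_i^{GG}$ and the corresponding $\mathcal{U}_{GG}(\epsilon)$.

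First I would establish the ordering of the contraction coefficients, namely $0 = \beta_i^{RG} \le \beta_i^{SGG} \le \beta_i^{GG}$ for every $i$. This is precisely the content of Lemma 1: at each iteration the expected reduction of the squared error for SGG is bounded below by the averaged-discrepancy quantity (the randomised-gossip reduction, i.e.\ $\beta^{RG}=0$) and above by the maximal-discrepancy quantity (the greedy-gossip reduction $\beta^{GG}$). Equivalently, the monotonicity $\partial f_s(p)/\partial p \ge 0$ from Theorem 2, together with the observation that $\eta_l$ and $f_s(p)$ share the same monotonicity in $p$, shows that $\beta_i$ is nondecreasing in $p$, so the endpoint values at $p=0$ and $p=1$ bracket the value at any intermediate $p$. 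Since taking the minimum over $i$ is a monotone operation, the pointwise ordering passes to $\min_{i\in[l]}\{\beta_i^{RG}\} \le \min_{i\in[l]}\{\beta_i^{SGG}\} \le \min_{i\in[l]}\{\beta_i^{GG}\}$.

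Next I would subtract these ordered quantities from the common base $\lambda_2(\overline{W})$, obtaining
$$\lambda_2(\overline{W}) - \min_{i\in[l]}\{\beta_i^{GG}\} \;\le\; \lambda_2(\overline{W}) - \min_{i\in[l]}\{\beta_i^{SGG}\} \;\le\; \lambda_2(\overline{W}),$$
with all three arguments lying in $(0,1)$. The map $x \mapsto 3\log\epsilon^{-1}/\log(x^{-1})$ is increasing on $(0,1)$ for fixed $\epsilon\in(0,1)$: there $\log(x^{-1})$ is positive and strictly decreasing, so its reciprocal is increasing, and multiplication by the positive constant $3\log\epsilon^{-1}$ preserves this. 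Applying this monotone map to the displayed chain preserves the order and yields directly $\mathcal{U}_{GG}(\epsilon) \le \mathcal{U}_{SGG}(\epsilon) \le \mathcal{U}_{RG}(\epsilon)$, which is the claim.

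The main obstacle is the rigorous justification of the $\beta$-ordering rather than the final monotone-function step, which is elementary. The subtlety is that the three algorithms evolve along different state trajectories, so the state-dependent coefficients $\beta_i$ are a priori not comparable pointwise across algorithms. The resolution is that Lemma 1 bounds each per-step reduction not by foreign-trajectory quantities but by the intrinsic averaged- and maximal-discrepancy functionals of the \emph{same} state, and these satisfy $\frac{1}{|\mathcal{N}_s|}\sum_{t}[\cdot]^2 \le \max_t[\cdot]^2$ unconditionally; this is what makes the sandwiching legitimate state by state. One must additionally verify that $\lambda_2(\overline{W}) - \min_{i\in[l]}\{\beta_i^{GG}\}$ stays strictly positive, so that the logarithm in the denominator of $\mathcal{U}_{GG}$ is well defined, which follows from the convergence guarantee that every contraction factor remains in $(0,1)$.
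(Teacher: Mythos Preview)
Your route differs from the paper's in one structural respect. The paper does not obtain the greedy bound by specialising Theorem~3 to $p=1$; instead it rederives $\mathcal{U}_{GG}(\epsilon)$ from scratch by the hybrid-step trick of applying one SGG step after $l-1$ greedy steps, which produces a contraction factor $\lambda_2(\overline{W})-\beta_l-\xi_l$ with an explicit additional slack $\xi_l\ge 0$ (nonnegativity coming from Lemma~1). The three bounds are then written with denominators $\lambda_2(\overline{W})$, $\lambda_2(\overline{W})-\min_i\beta_i$, and $\lambda_2(\overline{W})-\min_i\beta_i-\min_i\xi_i$, and the ordering is read off. Your argument is more economical: you treat all three algorithms as instances of Theorem~3 at $p\in\{0,p,1\}$ and use the monotonicity of $\eta_l$ (hence $\beta_l$) in $p$ from Theorem~2 to order the contraction factors directly, then push the ordering through the monotone map $x\mapsto 3\log\epsilon^{-1}/\log x^{-1}$. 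Conceptually this is cleaner and makes the role of Theorem~2 explicit, whereas the paper's introduction of $\xi_l$ achieves the same thing with one more symbol.

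You are right to flag the trajectory-dependence issue, and your diagnosis is accurate: the $\beta_i$ entering $\mathcal{U}_{SGG}$ are computed along the SGG path while those entering $\mathcal{U}_{GG}$ are computed along the greedy path, so Lemma~1's per-state sandwich does not literally deliver $\min_i\beta_i^{SGG}\le\min_i\beta_i^{GG}$. The paper's proof does not resolve this either: its $\beta_l$ in the greedy derivation is evaluated on the greedy trajectory, and the final ``it is easy to verify'' step silently identifies it with the $\beta_l$ of Theorem~3. So your proposal is at the same level of rigour as the paper's own argument; neither fully closes this gap, and the theorem should be read as comparing the \emph{form} of the bounds produced by a common method rather than as a pointwise inequality between trajectory-dependent numbers.
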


\begin{proof}
Implementing one step SGG after applying $l-1$ steps of greedy gossip updates and using Eq. (\ref{eq:ana15}), we have 
\begin{equation} \label{eq:ana17} 
\begin{split}
&\mathbb{E}\left[\left\|W^{SGG}(l) W^{GG}(1 : l-1) a(0)-\overline{a}\right\|^{2}\right] =\mathbb{E}\left[\left\|W^{GG}(1 : l-1) a(0)-\overline{a}\right\|^{2}\right]  \\
&\quad \quad \quad-\frac{1}{2N}\sum\limits_{s = 1}^N {\sum\limits_{m = 1}^{\left| \mathcal{N}_{s} \right|} {{p^m}{{\left( {1 - p} \right)}^{\left| \mathcal{N}_{s} \right| - m}}\sum\limits_{{\mathcal N_{s,m}} \in \left\{\mathcal N_{s,m}\right\}} {\mathop {\max }\limits_{t \in {\mathcal N_{s,m}}} {{\left[ {{a_s}\left( {l - 1} \right) - {a_t}\left( {l - 1} \right)} \right]}^2}} } }\\
&\quad \quad \quad - \frac{1}{2N}\sum\limits_{s = 1}^N {{{\left( {1 - p} \right)}^{\left| \mathcal{N}_{s} \right|}}\frac{1}{{\left| \mathcal{N}_{s} \right|}}\sum\limits_{t \in \mathcal{N}_{s}} {{\left[ {{a_s}\left( {l - 1} \right) - {a_t}\left( {l - 1} \right)} \right]}^2} } \\
&\quad \quad \leq \left[\lambda_{2}(\overline{W})-\beta_{l}\right] \mathbb{E}\left[\left\|W^{GG}(1 : l-1) a(0)-\overline{a}\right\|^{2}\right]
\end{split}
\end{equation} 
Note that the recursive reduction of the expected square error of greedy gossip satisfies \cite{ustebay2010greedy}
\begin{equation} \label{eq:ana17a} 
\begin{split}
&\mathbb{E}\left[\left\|W^{GG}(1 : l) a(0)-\overline{a}\right\|^{2}\right] =\mathbb{E}\left[\left\|W^{GG}(1 : l-1) a(0)-\overline{a}\right\|^{2}\right]\\ 
&\quad \quad \quad -\frac{1}{2N}\sum\limits_{s = 1}^N {\mathop {\max }\limits_{t \in {\mathcal N_{s}}} {{\left[ {{a_s}\left( {l - 1} \right) - {a_t}\left( {l - 1} \right)} \right]}^2}  }
\end{split}
\end{equation} 
From Eqs. (\ref{eq:ana17}) and (\ref{eq:ana17a}), the convergence rate of the greedy gossip can be represented as
\begin{equation} \label{eq:ana18} 
\begin{split} 
&\mathbb{E}\left[\left\|W^{GG}(1 : l) a(0)-\overline{a}\right\|^{2}\right] =\mathbb{E}\left[\left\|W^{GG}(1 : l-1) a(0)-\overline{a}\right\|^{2}\right] \\
&\quad \quad \quad-\frac{1}{2N}\sum\limits_{s = 1}^N {\sum\limits_{m = 1}^{\left| \mathcal{N}_{s} \right|} {{p^m}{{\left( {1 - p} \right)}^{\left| \mathcal{N}_{s} \right| - m}}\sum\limits_{{\mathcal N_{s,m}} \in \left\{\mathcal N_{s,m}\right\}} {\mathop {\max }\limits_{t \in {\mathcal N_{s,m}}} {{\left[ {{a_s}\left( {l - 1} \right) - {a_t}\left( {l - 1} \right)} \right]}^2}} } }\\
&\quad \quad \quad-\frac{1}{2N}\sum\limits_{s = 1}^N {{{\left( {1 - p} \right)}^{\left| \mathcal{N}_{s} \right|}}\frac{1}{{\left| \mathcal{N}_{s} \right|}}\sum\limits_{t \in \mathcal{N}_{s}} {{\left[ {{a_s}\left( {l - 1} \right) - {a_t}\left( {l - 1} \right)} \right]}^2} } \\
&\quad \quad \quad+\frac{1}{2N}\sum\limits_{s = 1}^N {\sum\limits_{m = 1}^{\left| \mathcal{N}_{s} \right|} {{p^m}{{\left( {1 - p} \right)}^{\left| \mathcal{N}_{s} \right| - m}}\sum\limits_{{\mathcal N_{s,m}} \in \left\{\mathcal N_{s,m}\right\}} {\mathop {\max }\limits_{t \in {\mathcal N_{s,m}}} {{\left[ {{a_s}\left( {l - 1} \right) - {a_t}\left( {l - 1} \right)} \right]}^2}} } }\\
&\quad \quad \quad+\frac{1}{2N}\sum\limits_{s = 1}^N {{{\left( {1 - p} \right)}^{\left| \mathcal{N}_{s} \right|}}\frac{1}{{\left| \mathcal{N}_{s} \right|}}\sum\limits_{t \in \mathcal{N}_{s}} {{\left[ {{a_s}\left( {l - 1} \right) - {a_t}\left( {l - 1} \right)} \right]}^2} } \\
&\quad \quad \quad-\frac{1}{2N}\sum\limits_{s = 1}^N {\mathop {\max }\limits_{t \in {\mathcal N_{s}}} {{\left[ {{a_s}\left( {l - 1} \right) - {a_t}\left( {l - 1} \right)} \right]}^2}  } \\
&\quad \quad \leq \left[\lambda_{2}(\overline{W})-\beta_{l}-\xi_l\right] \mathbb{E}\left[\left\|W^{GG}(1 : l-1) a(0)-\overline{a}\right\|^{2}\right]
\end{split}
\end{equation} 
where
\begin{equation}
\begin{split}
\gamma_l=&\frac{1}{2N}\sum\limits_{s = 1}^N {\mathop {\max }\limits_{t \in {\mathcal N_{s}}} {{\left[ {{a_s}\left( {l - 1} \right) - {a_t}\left( {l - 1} \right)} \right]}^2}  }\\
&-\frac{1}{2N}\sum\limits_{s = 1}^N {\sum\limits_{m = 1}^{\left| \mathcal{N}_{s} \right|} {{p^m}{{\left( {1 - p} \right)}^{\left| \mathcal{N}_{s} \right| - m}}\sum\limits_{{\mathcal N_{s,m}} \in \left\{\mathcal N_{s,m}\right\}} {\mathop {\max }\limits_{t \in {\mathcal N_{s,m}}} {{\left[ {{a_s}\left( {l - 1} \right) - {a_t}\left( {l - 1} \right)} \right]}^2}} } }\\
&-\frac{1}{2N}\sum\limits_{s = 1}^N {{{\left( {1 - p} \right)}^{\left| \mathcal{N}_{s} \right|}}\frac{1}{{\left| \mathcal{N}_{s} \right|}}\sum\limits_{t \in \mathcal{N}_{s}} {{\left[ {{a_s}\left( {l - 1} \right) - {a_t}\left( {l - 1} \right)} \right]}^2} } 
\end{split}
\end{equation} 
\begin{equation} 
\xi_l = \frac{\gamma_l}{\mathbb{E}\left[\left\|W^{GG}(1 : l-1) a(0)-\overline{a}\right\|^{2}\right]}
\end{equation} 
It follows from Lemma 1 that $\xi_l \ge 0$ and the equality holds if and only if $a(l-1)=\bar a$. Repeatedly applying Eq. (\ref{eq:ana18}) gives
\begin{equation} \label{eq:ana19} 
\begin{split}
\mathbb{E}\left[\left\|W^{GG}(1 : l) a(0)-\overline{a}\right\|^{2}\right] &\leq\|a(0)-\overline{a}\|^{2} \prod_{i=1}^{l}\left(\lambda_{2}(\overline{W})-\beta_{i}-\xi_i\right)\\
 &\leq \|a(0)-\overline{a}\|^{2}\left(\lambda_{2}(\overline{W})-\min\limits_{i \in [l]}\left\{\beta_{i}\right\}-\min\limits_{i \in [l]}\left\{\xi_{i}\right\}\right)^l
\end{split}
\end{equation} 
Using Eq. (\ref{eq:ana19}) with the help of Markov's inequality, we have
\begin{equation} \label{eq:ana19a} 
\begin{split}
\Pr\left(\frac{\|W^{GG}(1 : l) a(0)-\overline{a}\|}{\|a(0)-\overline{a}\|} \geq \epsilon\right) &= \Pr \left(\frac{\|W^{GG}(1 : l) a(0)-\overline{a}\|^2}{\|a(0)-\overline{a}\|^2} \geq \epsilon^2\right)\\
&\leq \frac{\mathbb{E}\left[\|W^{GG}(1 : l) a(0)-\overline{a}\|^{2}\right]}{\epsilon^{2}\|a(0)-\overline{a}\|^{2}} \\
 &\le \epsilon^{-2} \left(\lambda_{2}(\overline{W})-\min\limits_{i \in [l]}\left\{\beta_{i}\right\}-\min\limits_{i \in [l]}\left\{\xi_{i}\right\}\right)^l
\end{split}
\end{equation} 
which means that the $\epsilon$-averaging time for greedy gossip is upper bounded by
\begin{equation} \label{eq:ana19b} 
T_{ave}^{GG}(\epsilon) \leq \frac{3 \log \epsilon^{-1}}{\log{\left(\lambda_{2}(\overline{W})-\min\limits_{i \in [l]}\left\{\beta_{i}\right\}-\min\limits_{i \in [l]}\left\{\xi_{i}\right\}\right)^{-1}}}
\end{equation}
Recall that the upper bound for the $\epsilon$-averaging time of randomised gossip is given by \cite{boyd2006randomized}
\begin{equation} \label{eq:ana20} 
T_{ave}^{RG}(\epsilon) \leq \frac{3 \log \epsilon^{-1}}{\log{\lambda_{2}(\overline{W})^{-1}}}
\end{equation}
Combining Eqs. (\ref{eq:ana16b}), (\ref{eq:ana19b}) and (\ref{eq:ana20}), it is easy to verify Eq. (\ref{eq:anath1}), which completes the proof.
\end{proof}

\begin{remark}
It follows from Eq. (\ref{eq:anathsgg1}) that $\eta_l=0$ if $p=0$, which implies that ${\mathcal U}_{SGG}(\epsilon) = {\mathcal U}_{RG}(\epsilon)$ with zero node activation probability. Additionally, we can easily verify that $\gamma_l=0$ with unity node activation probability, i.e., $p=1$. This means that the convergence bound of the proposed SGG algorithm becomes the same as that of the greedy gossip, i.e., ${\mathcal U}_{SGG}(\epsilon) = {\mathcal U}_{GG}(\epsilon)$. These results also clearly demonstrate the tradeoff performance of SGG in terms of convergence bound and coincide with previous analysis.
\end{remark}

\begin{remark}
Note that Theorem 4 only provides theoretical comparison of the upper bound of $\epsilon$-averaging time for different gossip algorithms. However, extensive numerical analysis shows that the proposed algorithm achieves very comparable performance to that of the greedy gossip with significantly reduced communication burden.
\end{remark}

\begin{remark}
It is straightforward to verify that both $\beta_i$ and $\xi_i$ are trivial and can be ignored for sparse networks, compared to $\lambda_{2}\left(\overline{W}\right)$. To see this, let us consider a special network: each local node can only communicate with one local node. Under this condition, one can imply that $\beta_l=0$ and $\xi_l=0$, which indicates that ${\mathcal U}_{GG}(\epsilon) = {\mathcal U}_{SGG}(\epsilon) = {\mathcal U}_{RG}(\epsilon)$. This also means that the performance of both greedy gossip and the proposed SGG is very close to that of randomised gossip for sparse network and only provide significant convergence speed improvement for dense network. This fact will be illustrated via numerical investigations in the next section.
\end{remark}

\section{Numerical Simulations}
\label{sec:5}

This section evaluates performance the proposed SGG algorithm with comparison to randomised gossip and greedy gossip using Monte-Carlo simulations.  

As stated in \cite{gupta2000capacity,ustebay2010greedy}, the random geometric topology is widely-accepted as a general model to represent the connectivity of wireless networks for the purpose of analysing the characteristics of network-wide computation algorithms. For this reason, we choose the random geometric graph with 200 nodes as the network connection topology in performance evaluation. For the random geometric network, each agent is randomly placed inside a unit square region and two nodes are connected if their relative distance is less than $r\left(N\right)=\sqrt{\frac{d\log N}{N}}$ with $d$ being a scaling factor. It is clear that the random geometric network with higher $d$ provides dense connection while with lower $d$ represents sparse network topology. An example of the random geometric network topology with $d=2$ is presented in Fig. \ref{fig:topology}. Similar to \cite{ustebay2010greedy}, we utilise four different initial conditions to examine the performance of all algorithms: (1) Gaussian bumps; (2) linear-varying field; (3) spike field; and (4) uniform random field. The distributions of Gaussian bumps and linear-varying field are shown in Fig. \ref{fig:field}. The spike field  initialises all local agents by setting the value of one node as 1 and all others as 0. For the uniform random field, all nodes are initialised with a random value that is drawn from a normal distribution ${\mathcal N} \left(0,1\right)$. For performance evaluation of all tested gossip algorithms, we check the expected reduction of the relative error ${\left\| {a\left( L \right) - {\bar a}} \right\|}/{\left\| {a\left( 0 \right) - {\bar a}} \right\|}$ over 1000 Monte-Carlo runs.

\begin{figure*}[h!]
\centering
\includegraphics[width=.5\linewidth]{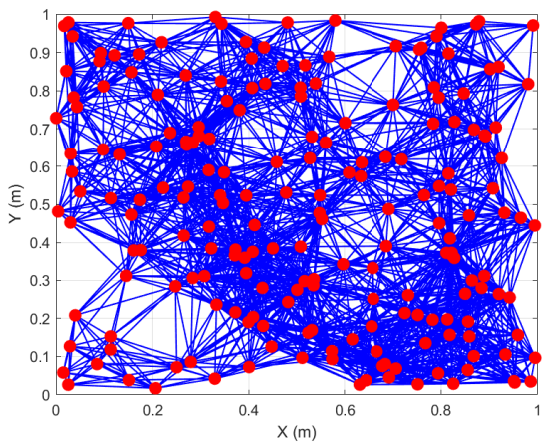}
\caption{An example of the random geometric network topology. The red circles denote the agent locations and the blue lines refer to the connections between local nodes. Each node is randomly placed inside the surveillance region and two nodes are connected if their relative distance is less than $\sqrt{\frac{2\log N}{N}}$.}
\label{fig:topology}
\end{figure*}

\begin{figure*}[h!]
\centering
\begin{tabular}{cc}
	\subfloat[Gaussian bumps]{\includegraphics[width=.5\linewidth]{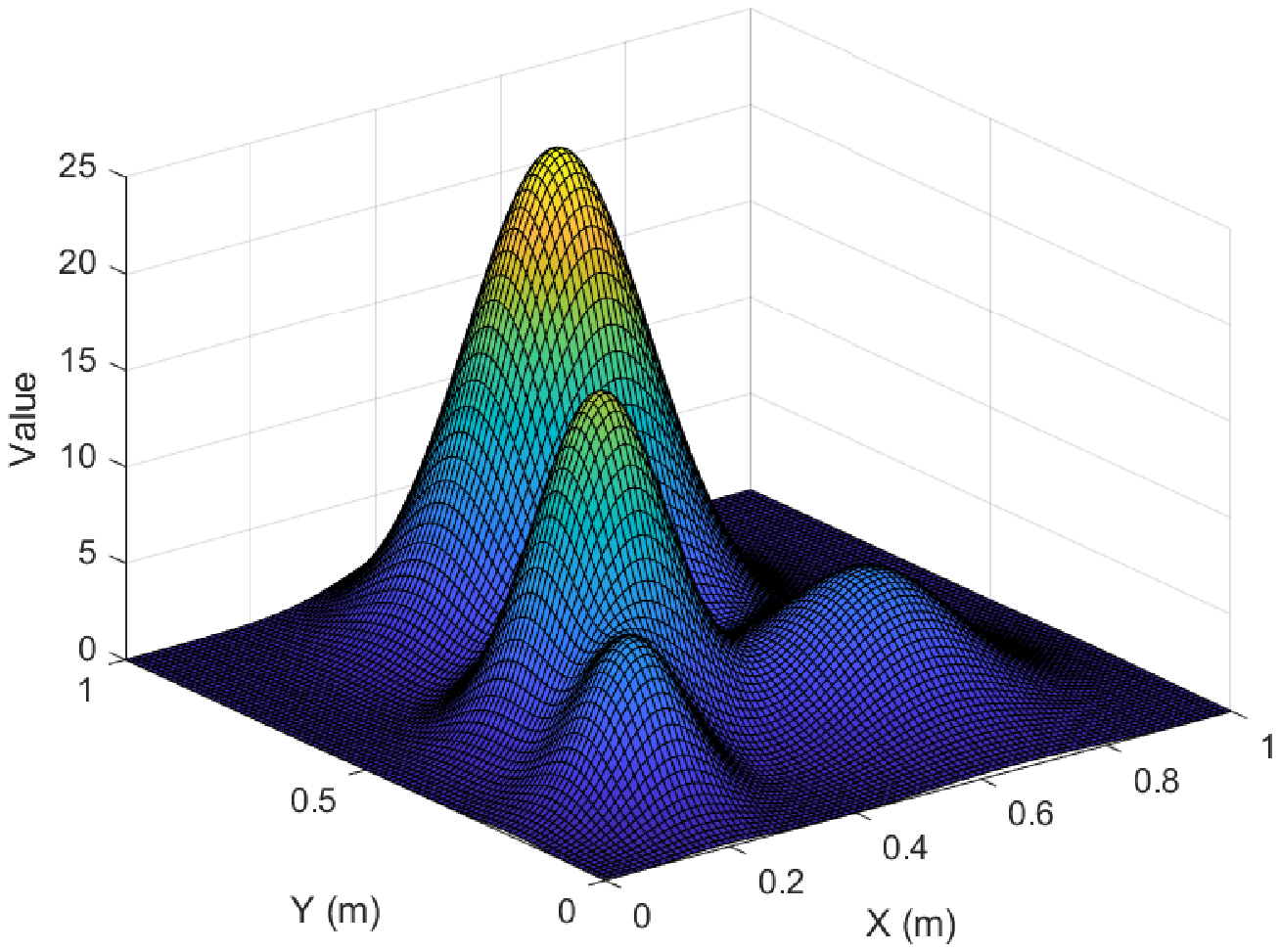}}
	&\subfloat[Linear-varying field]{\includegraphics[width=.5\linewidth]{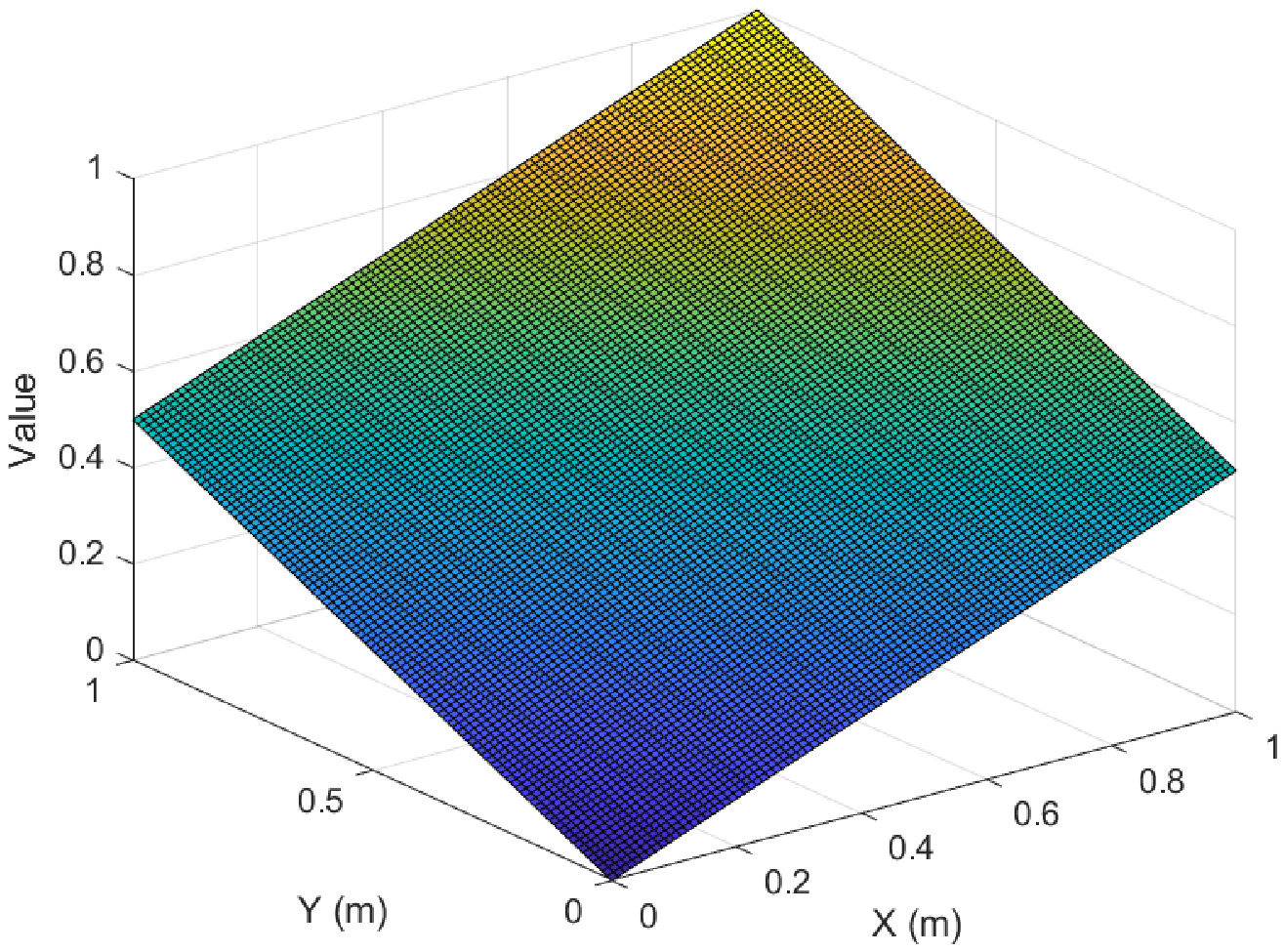}}
\end{tabular}
\caption{Distributions of Gaussian bumps and linear-varying field in the considered scenario.}
\label{fig:field}
\end{figure*}

\subsection{Effect of Communication Iterations}

In gossip-based distributed computation, information transmission via multiple rounds of communication among locally-connected nodes are required and the performance highly depends on the number of iterations, $L$. In order to investigate the effect of the parameter $L$ on the convergence speed, Monte-Carlo comparisons of different gossip algorithms are carried out with respect to different number of iterations. The main objective of the performance comparison in this subsection is to validate the analysis results of the gossip algorithm, which are presented in Sec. \ref{sec:3}. The simulation results of average convergence error obtained from Monte-Carlo simulations are depicted in Fig. \ref{fig:iteration}. In the simulations, the node activation probability for implementing SGG and the scaling factor $d$ that models the network connectivity are set as $p=0.5$ and $d=2$, respectively. 

From Fig. \ref{fig:iteration}, it can be noted that randomised gossip has the lowest convergence speed among these three different gossip algorithms. As greedy gossip process picks up the optimal communication path for every local node at each gossip iteration, it exhibits the fastest convergence rate at the expense of high communication burden. The proposed SGG only leverages a suboptimal communication path, e.g., performing greedy node selection within a  set of randomly-chosen active nodes. Therefore, the SGG provides tradeoff convergence performance between randomised gossip and greedy gossip. As the probability threshold in selecting the active node is $p=0.5$, the proposed algorithm only requires half communication burden in the average sense at each iteration, compared to the greedy gossip. These results conform with the analytic findings presented in Sec. \ref{sec:3}. Interestingly, the performance of SGG is very comparable to that of the greedy gossip and its convergence rate is much faster than that of the randomised gossip even with $p=0.5$. 

\begin{figure}[h!]
\centering
\begin{tabular}{cc}
	\subfloat[Gaussian bumps field]{\includegraphics[width=.5\linewidth]{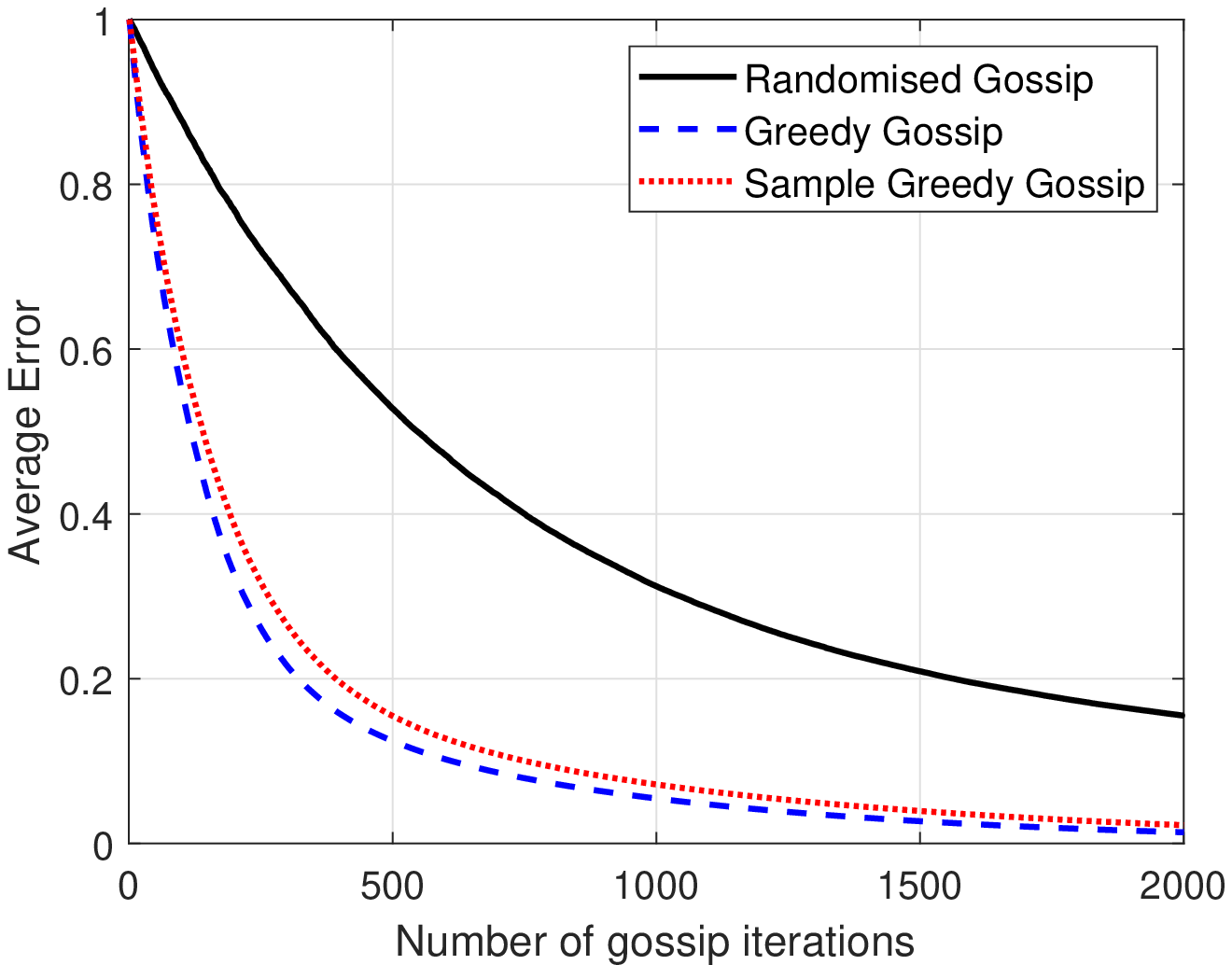}}
	&\subfloat[Spike field]{\includegraphics[width=.5\linewidth]{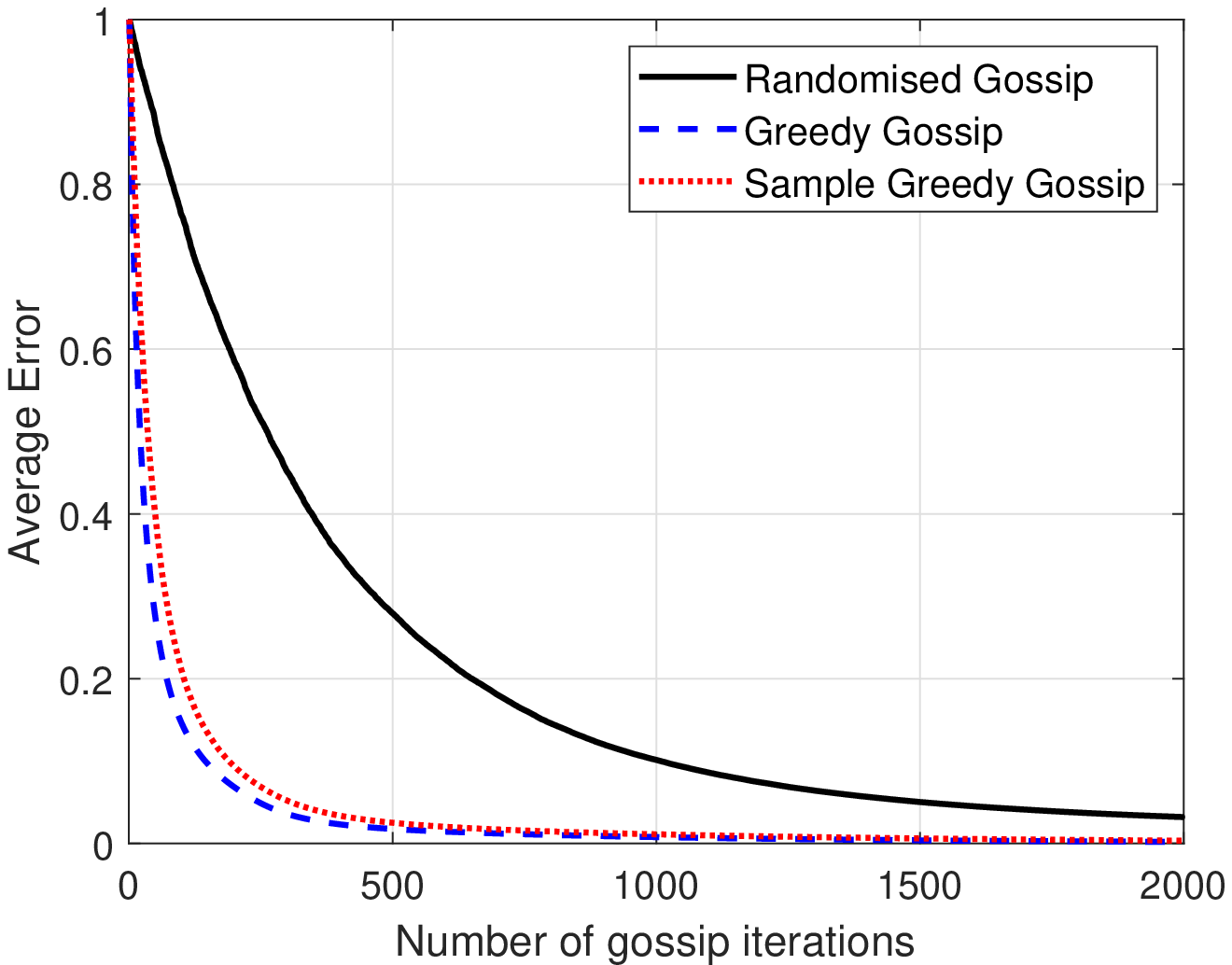}}
\end{tabular}
\begin{tabular}{cc}
	\subfloat[Uniform random field]{\includegraphics[width=.5\linewidth]{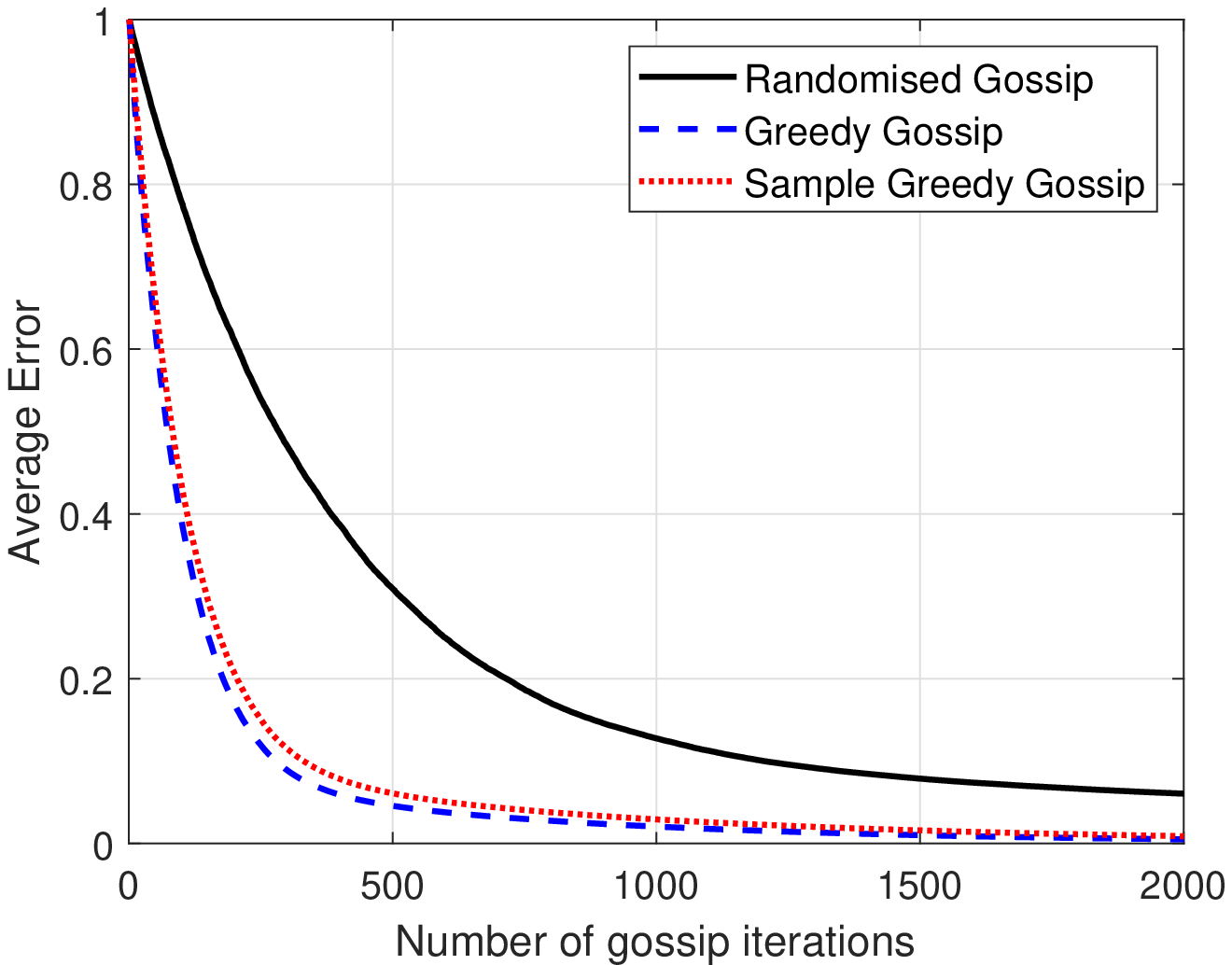}}
	&\subfloat[Linearly-varying field]{\includegraphics[width=.5\linewidth]{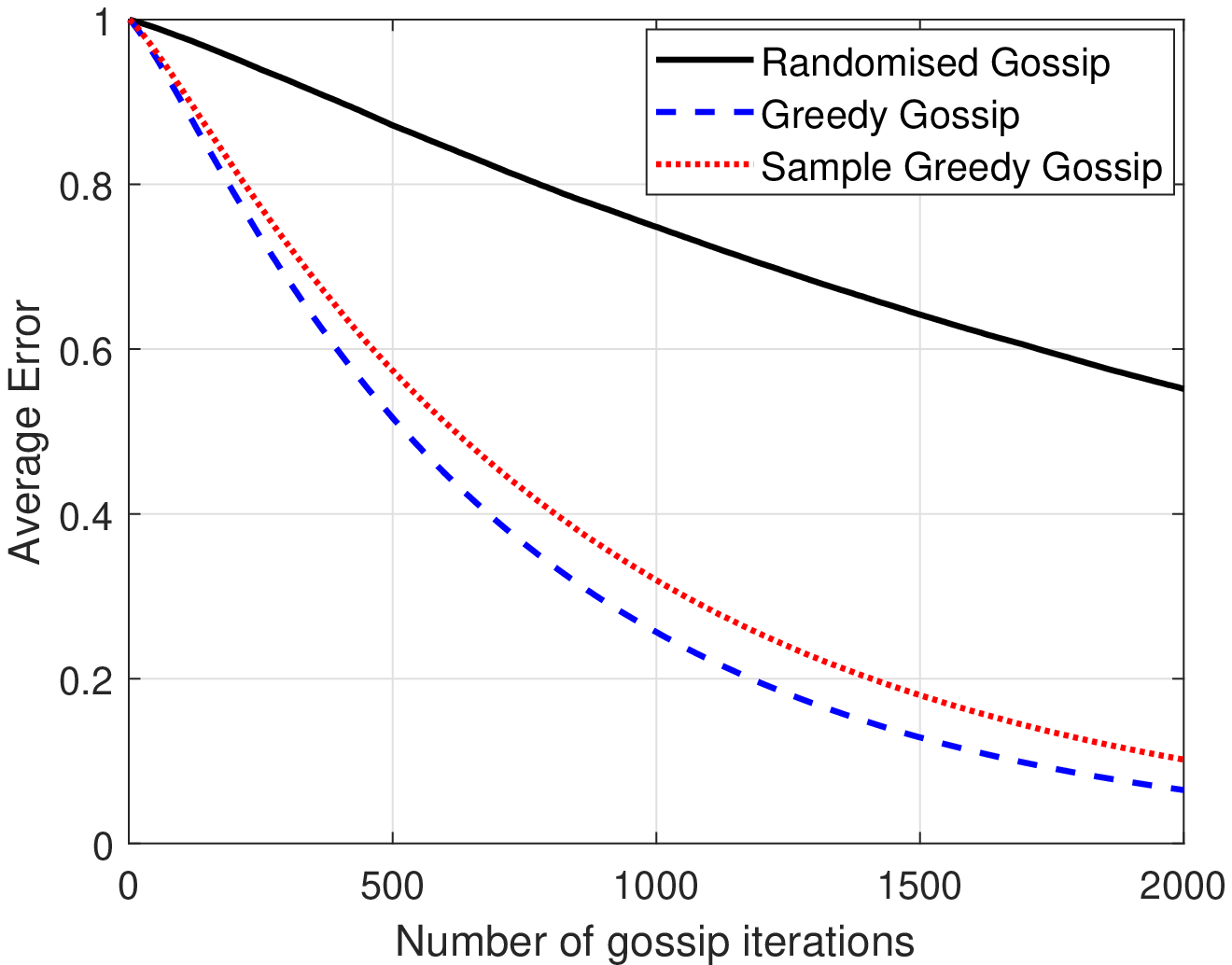}}
\end{tabular}
    \caption{Monte-Carlo comparison results of average convergence error with respect to different number of gossip iterations.}
\label{fig:iteration}
\end{figure}

\subsection{Effect of Node Activation Probability}

Now, let us investigate the effect of the node activation probability on the tradeoff performance of the proposed SGG algorithm. For this purpose, the number of gossip iterations in each Monte-Carlo run and the scaling factor are set as $L=1000$ and $d=2$, respectively. Fig. \ref{fig:probability} presents the comparison results of average convergence error for different gossip algorithms with different node activation probabilities obtained from Monte-Carlo simulations.  

From Fig. \ref{fig:probability}, it can be observed that the greedy gossip provides the best convergence performance among all the tested algorithms. This can be attributed to the fact that the greedy gossip process finds the optimal local node for average computation. However, as stated before, this achievement requires each local node to communicate with all its connected neighbours at each iteration. As a comparison, the proposed SGG offers great flexibility and well balance between communication cost and convergence performance introduced by the stochastic sampling strategy. With the increase of node activation probability, SGG provides improved convergence performance and converges to that of greedy gossip when $p=1$. If the local node cannot provide enough bandwidth for communication, a relatively small node activation probability can be selected to save the communication cost. When $p=0$, the proposed algorithm becomes identical to randomised gossip. The results confirm that the proposed SGG algorithm is a generalised version of the randomised and greedy gossip algorithms. 

\begin{figure}[h!]
\centering
\begin{tabular}{cc}
	\subfloat[Gaussian bumps field]{\includegraphics[width=.5\linewidth]{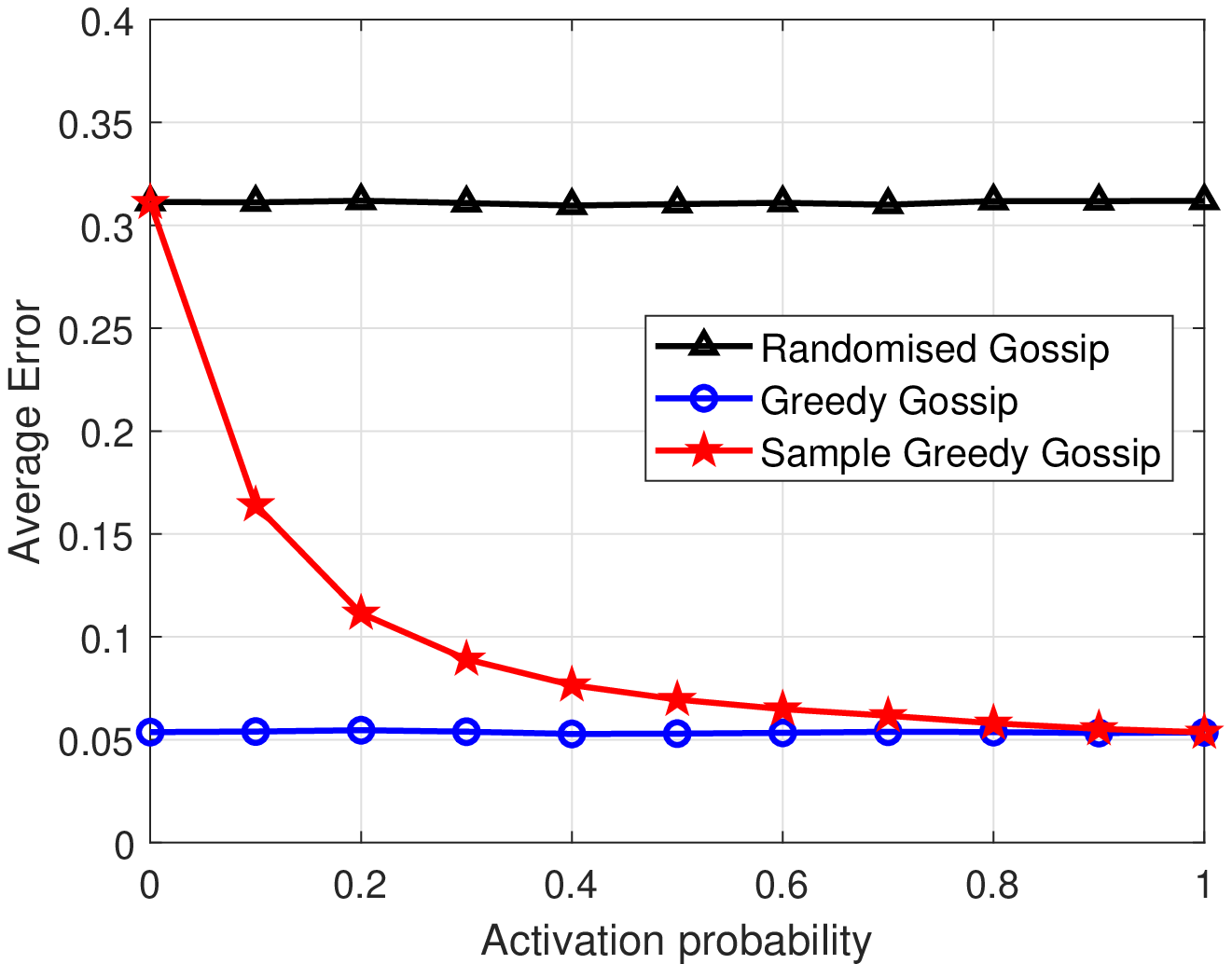}}
	&\subfloat[Spike field]{\includegraphics[width=.5\linewidth]{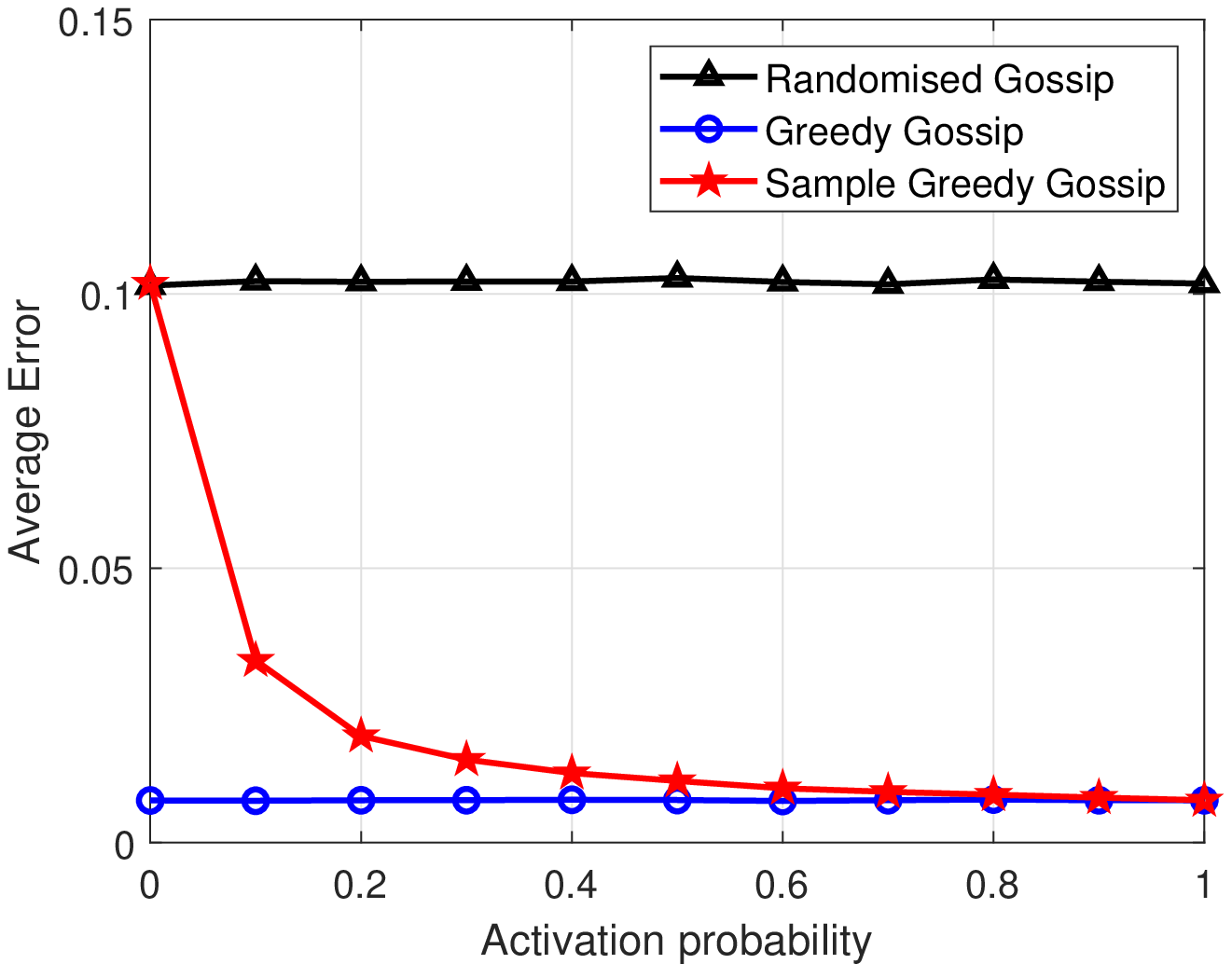}}
\end{tabular}
\begin{tabular}{cc}
	\subfloat[Uniform random field]{\includegraphics[width=.5\linewidth]{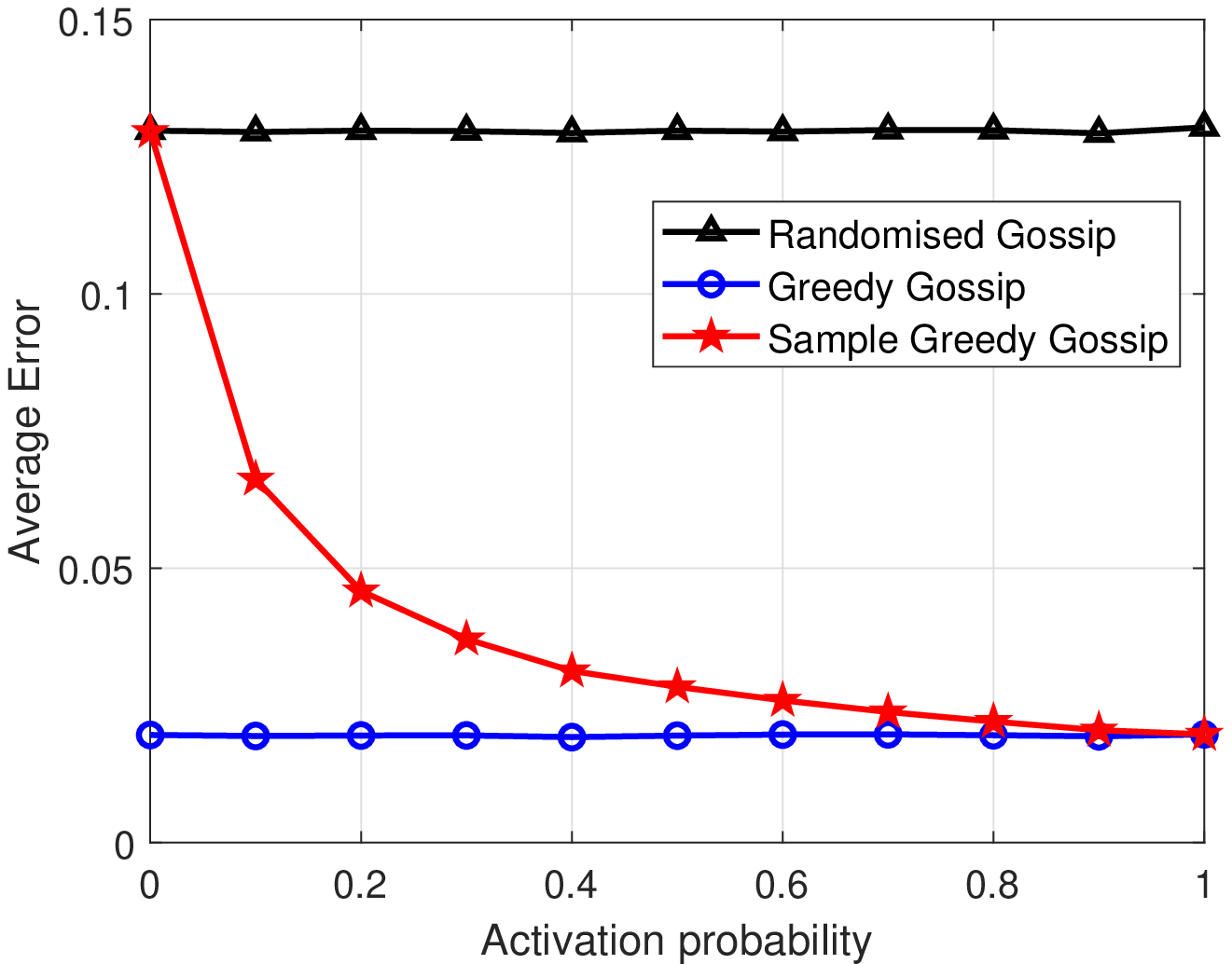}}
	&\subfloat[Linearly-varying field]{\includegraphics[width=.5\linewidth]{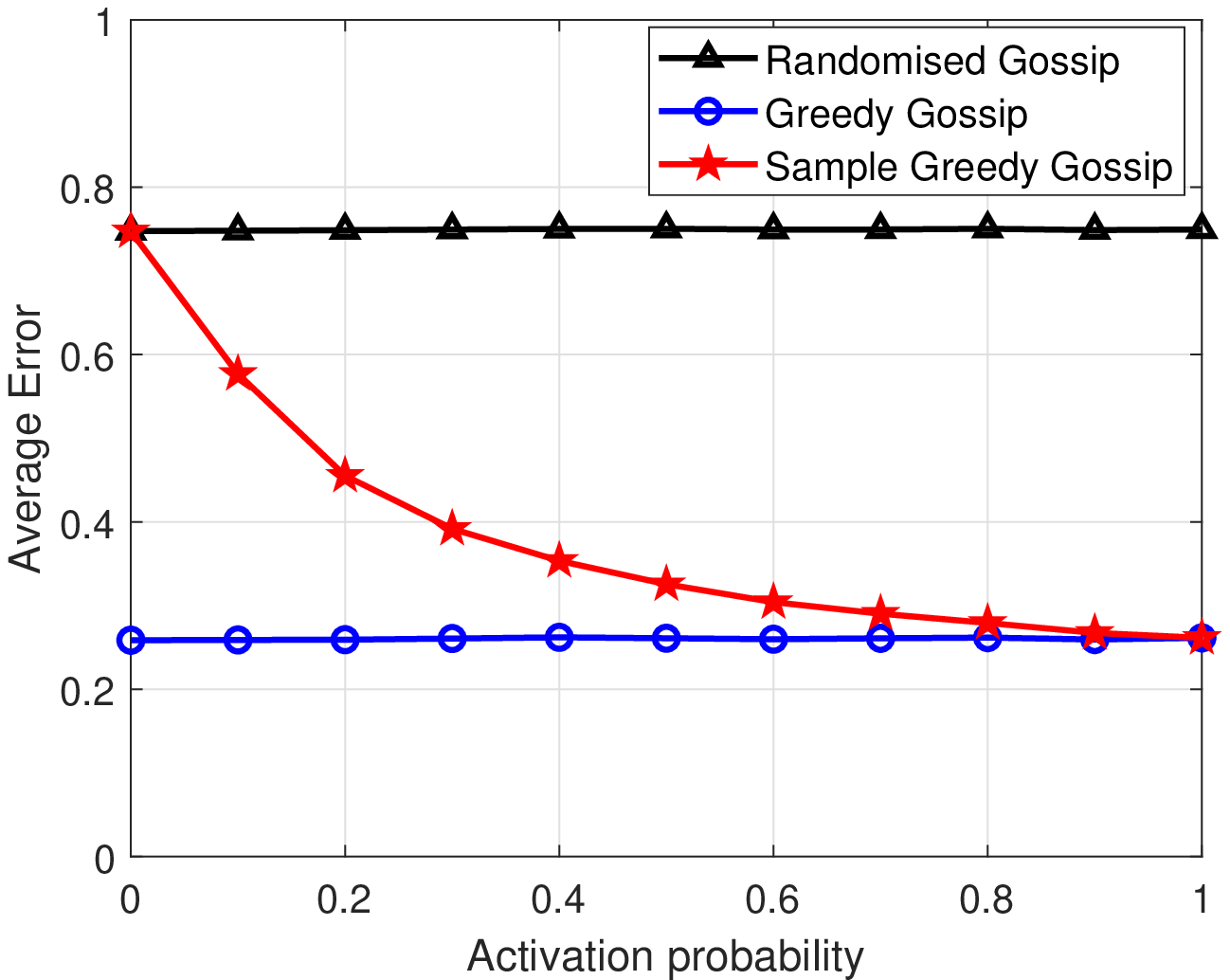}}
\end{tabular}
    \caption{Monte-Carlo comparison results of average convergence error with respect to different node activation probabilities.}
\label{fig:probability}
\end{figure}

\subsection{Effect of Network Sparsity}

Except for the node activation probability, another important factor that affects the performance of the proposed SGG algorithm is the network sparsity or connectivity. For this reason, this subsection empirically analyses the performance of difference gossip algorithms with different network sparsity. Fig. \ref{fig:range} presents the comparison results of average convergence error for different gossip algorithms with different scaling factors obtained from Monte-Carlo simulations. In all simulation runs, the number of gossip iterations in each Monte-Carlo run and the node activation probability are set as $L=1000$ and $p=0.5$, respectively. 

It can be observed from Fig. \ref{fig:range} that the performance discrepancy among all tested gossip algorithms is not much when for sparse network, i.e., smaller $d$. With the increase of the scaling factor $d$, both greedy gossip and the proposed SGG significantly improve the convergence performance. The reason of this fact is clear: the randomised gossip process randomly selects a local node in communication while the other two gossip algorithms either find an optimal or sub-optimal path in information exchange. Therefore, dense networks provide higher possibility to improve the convergence speed with an optimised communication path.

\begin{figure}[h!]
\centering
\begin{tabular}{cc}
	\subfloat[Gaussian bumps field]{\includegraphics[width=.5\linewidth]{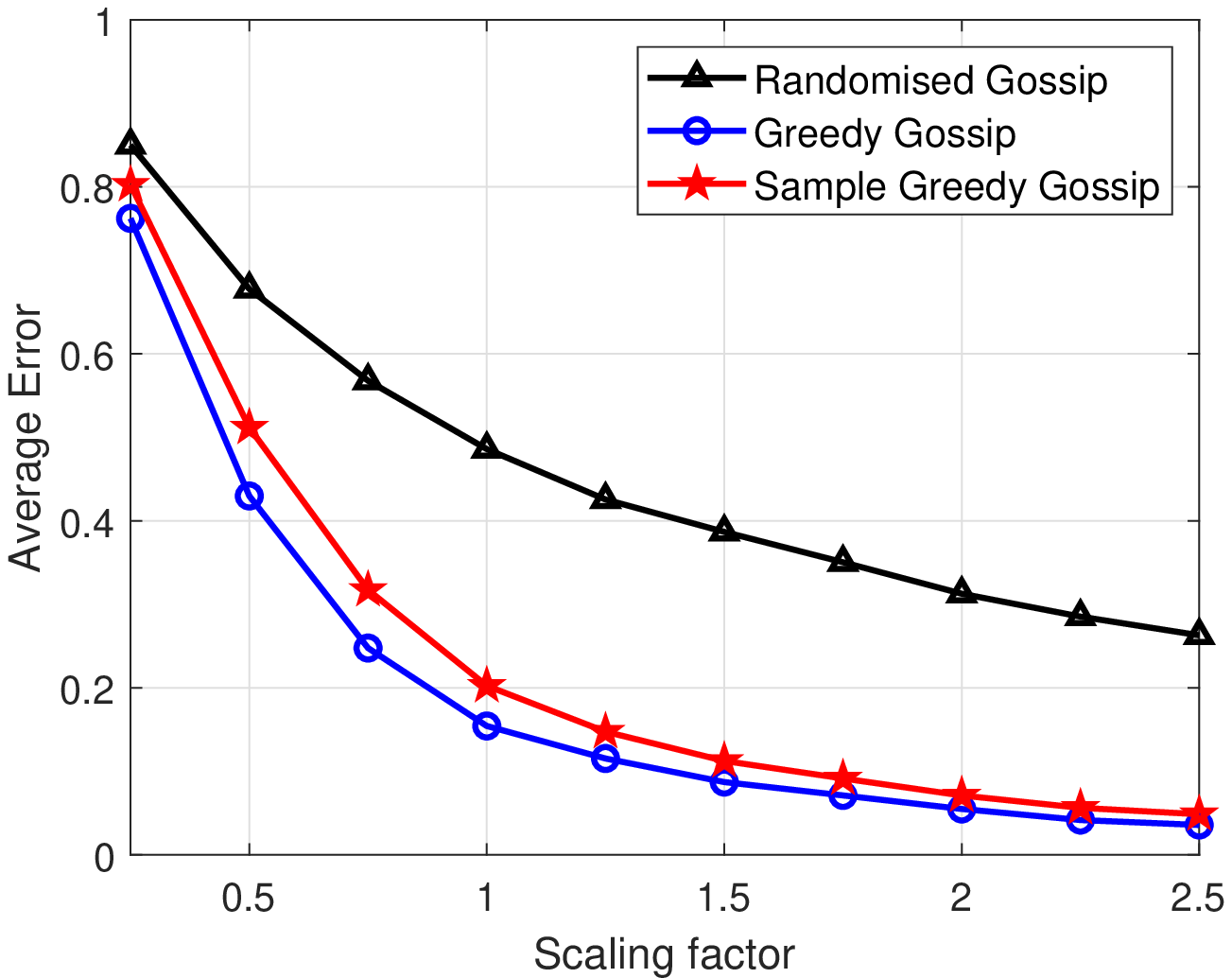}}
	&\subfloat[Spike field]{\includegraphics[width=.5\linewidth]{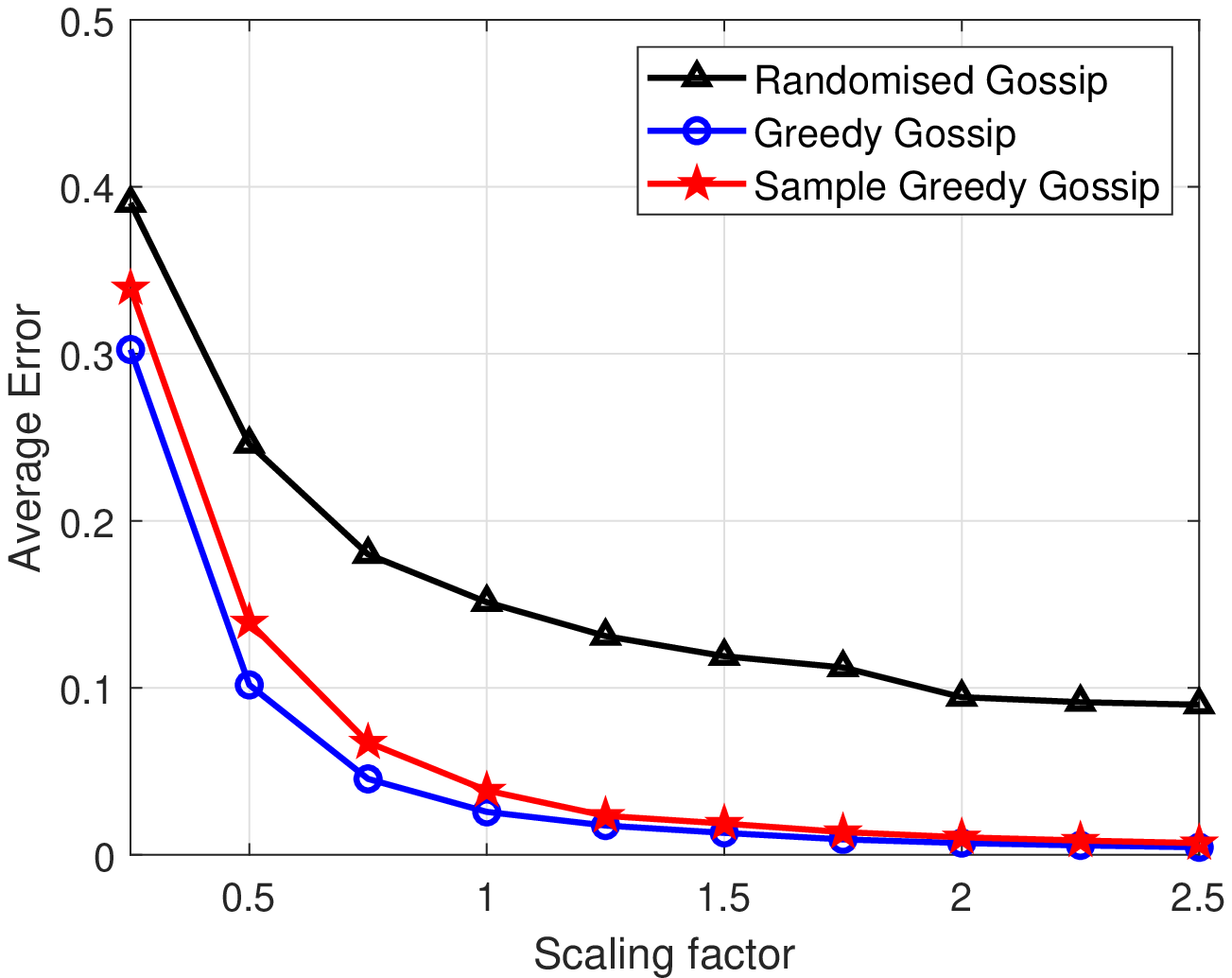}}
\end{tabular}
\begin{tabular}{cc}
	\subfloat[Uniform random field]{\includegraphics[width=.5\linewidth]{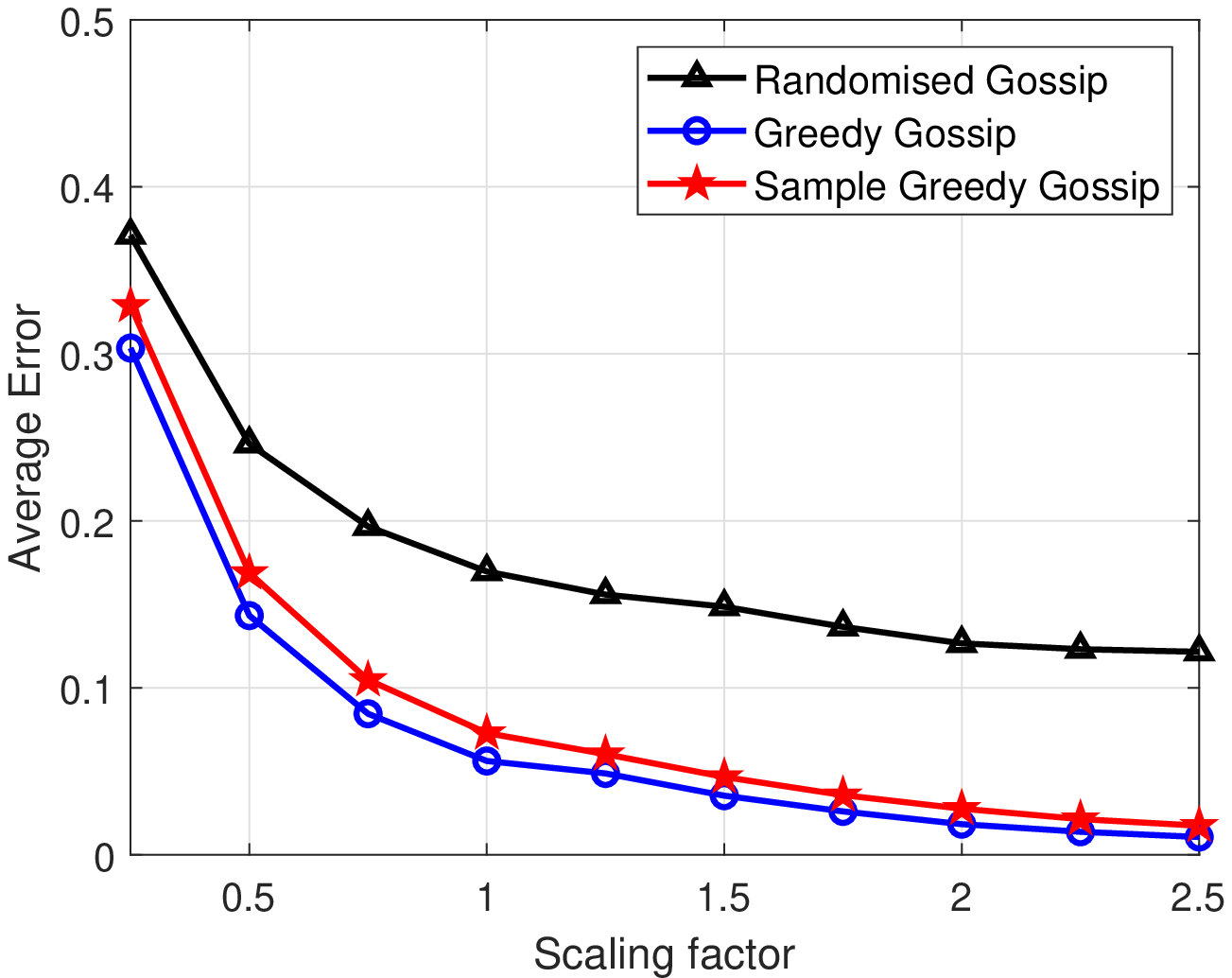}}
	&\subfloat[Linearly-varying field]{\includegraphics[width=.5\linewidth]{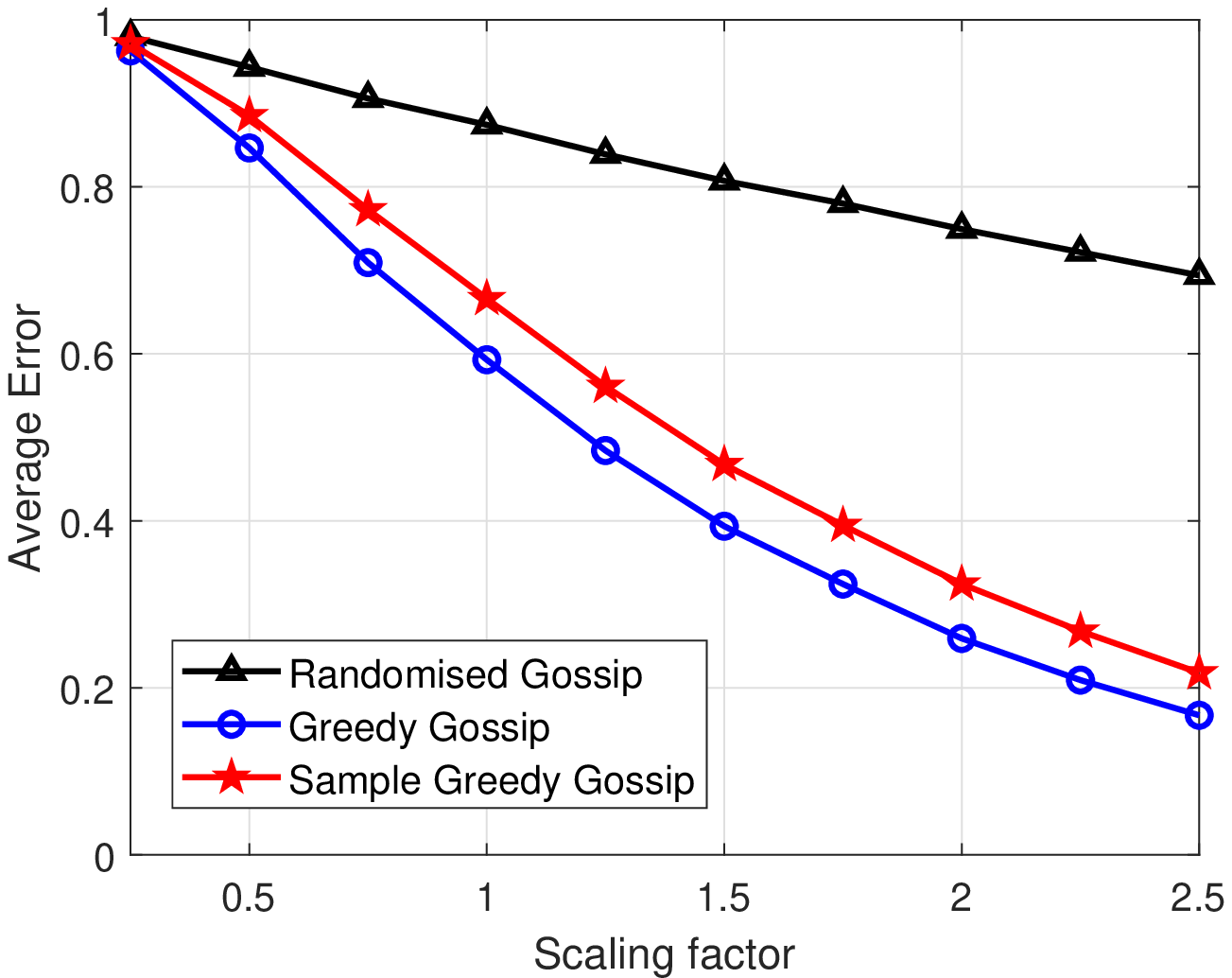}}
\end{tabular}
    \caption{Monte-Carlo comparison results of average convergence error with respect to different network sparsity.}
\label{fig:range}
\end{figure}

\section{Conclusions}
\label{sec:6}

This paper proposed a sample greedy gossip algorithm for average computation over a partially connected network. Rigorous convergence analysis of the proposed sample greedy gossip algorithm is carried out to support its applications. The empirical investigation demonstrates the validity of the theoretical analysis results. Also, theoretical and numerical analysis indicates that the proposed algorithm can be viewed as a generalised version of the randomised and greedy gossip algorithms. The prominent feature of the proposed algorithm lies in that it allows tradeoff  between convergence speed and communication burden: our algorithm shows faster convergence speed, compared to the randomised gossip, and requires less communication burden, compared to the greedy gossip.

\bibliographystyle{IEEEtran}
\bibliography{SGG}

\end{document}